\documentclass[12pt]{article}
\usepackage[top=1.25in, bottom=1.25in, left=1.25in, right=1.25in]{geometry} 
\usepackage[utf8]{inputenc} 
\usepackage{amssymb}
\usepackage{amsmath} 
\usepackage{mathrsfs}
\usepackage[T1]{fontenc}
\usepackage{newtxtext,newtxmath} 
\usepackage{appendix}
\usepackage{ntheorem} 
\usepackage{listings} 
\usepackage{flafter}
\usepackage{indentfirst} 
\usepackage{setspace}
\usepackage{color}
\usepackage{graphicx}
\usepackage{natbib}

\usepackage[hidelinks, colorlinks = False, urlcolor = Sepia, citecolor = MidnightBlue]{hyperref} 
\newtheorem{Theorem}{Theorem}
\newtheorem{Lemma}{Lemma}
\newtheorem{Proposition}{Proposition}

\newtheorem{Definition}{Definition}
\newtheorem{Corollary}{Corollary}
\newtheorem{Example}{Example}
\newtheorem{Assumption}{Assumption}
\newtheorem{Remark}{Remark}

\title{Convex Order and Moment Persuasion\footnote{We are greatly indebted to Gregory Pavlov, Maria Goltsman, and Charles Zheng for their invaluable guidance and support. We thank Mengxi Zhang and Benny Moldovanu for their generous help during the early stage of this project. We are grateful to Itai Arieli, Ian Ball, Yeon-Koo Che, Kevin He, Elliot Lipnowski, Roberto Saitto, Mark Whitmeyer, Kai Hao Yang and participants at 6th Durham Economics Theory Conference(Durham), Lisbon Meeting 2026(Lisbon), IWGTEA 2026(são paulo) for their helpful comments and suggestions.}}
\author{Xin Wang\footnote{Wang: Department of Economics, Western University, \href{mailto:xwan2552@uwo.ca}{xwan2552@uwo.ca}}}
\date{\today}

\begin{document}

\maketitle
\begin{abstract}
If a probability measure is a mean-preserving contraction of another, the two measures are said to be in convex order. This paper provides a complete characterization of mean-preserving contractions and their extreme points in the multidimensional setting, extending the results of \citet{kleiner2024extreme} to incorporate fully revealing regions and lower-dimensional pooling regions. A central feature of our characterization is a ``concentration'' structure: each state can be mapped only to posterior means lying within its own irreducible component. We apply these results to the moment persuasion problem.
      \end{abstract}

\newpage
\section{Introduction}
\par
The analysis of Bayesian persuasion increasingly relies on the geometric structure of feasible posterior beliefs. A foundational insight, originating in \citet{blackwell1953equivalent} and \citet{strassen1965existence}, is that any distribution of posterior means induced by a signal must be a mean-preserving contraction of the prior. This observation underlies not only the classical persuasion framework of \citet{kamenica2011bayesian} but also the growing literature on mean-measurable persuasion, where the sender's payoff depends solely on the posterior mean (\citet{kolotilin2017persuasion}; \citet{dworczak2019simple}; \citet{arieli2023optimal}). In these environments, the sender's feasible set of posterior-mean distributions coincides exactly with the set of measures dominated by the prior in convex order. Consequently, any persuasion problem with a mean-measurable objective reduces to an optimization over the convex set of mean-preserving contractions, and by Bauer's theorem, the optimum is attained at an extreme point of this set. Understanding the geometry of these extreme points is therefore essential for characterizing optimal information structures in multidimensional persuasion problems.

\par
While convex order provides a powerful abstraction for feasible posterior-mean distributions, its geometric structure in multidimensional settings remains subtle. In one dimension, mean-preserving contractions admit a complete description through integral inequalities, which partition the state space into intervals and singleton points and yield sharp characterizations of extreme points (\citet{kleiner2021extreme}; \citet{arieli2023optimal}). In higher dimensions, however, no analogous integral representation exists, and the geometry of feasible contractions becomes substantially more intricate. Recent advances in martingale optimal transport—particularly the irreducible-component decompositions of \citet{de2019irreducible} and \citet{obloj2017structure}—suggest that convex order admits a canonical partition of the state space into convex regions that constrain how posterior means can be assigned. Yet these results do not directly characterize the extreme points of the mean-preserving contraction set, nor do they explore economic implications. This gap motivates the present paper.
\par
Theorem \ref{Th1} provides a complete geometric characterization of mean-preserving contractions in multidimensional settings. We show that whenever a measure $\mu$ is dominated by $\mu_0$ in convex order, the state space admits a canonical partition into convex, relatively open cells such that the restriction of $\mu$ to each cell remains dominated by the corresponding restriction of $\mu_0$. Within each cell, any feasible martingale transport must map states into the closure of that same cell, a concentration property that parallels the irreducible-component structure identified in the martingale transport literature. This decomposition generalizes the one-dimensional partition into intervals and singleton points and provides the geometric backbone for our analysis of extreme points.
\par
To obtain a sharp and tractable characterization as Theorem \ref{Th1}, we impose a mild regularity condition—Assumption \ref{As1}—which requires that the relative boundaries of the convex cells be negligible under the prior. Although absolute continuity of $\mu_0$ does not guarantee this property in general (because the convex partition may contain uncountably many cells), the assumption holds in many economically relevant environments, including those where the partition arises from the subdifferential of a convex function, or in two-dimensional settings where the Lebesgue measurability can ensure boundary negligibility.
\par
Theorem \ref{Th2} provides a complete characterization of the extreme points of the mean-preserving contraction set. Recall that a finite collection of vectors $\{x_1,..., x_k\} \subset \mathbb{R}^n$ is affinely independent if the unique solution to $\sum^k_{i=1}\beta_ix_i=0$ and $\sum^k_{i=1}\beta_i=0$ is $\beta_1=\beta_2=...=\beta_k=0$. 
Relative to the structural description of mean-preserving contractions in Theorem \ref{Th1}, Theorem \ref{Th2} shows that extremality requires that the support of the dominated measure be affinely independent within each convex cell of the convex partition. This result generalizes the one-dimensional characterizations of extreme points obtained by \citet{kleiner2021extreme} and \citet{arieli2023optimal}. In one dimension, each convex cell collapses to an interval, and affine independence reduces to the requirement that the support contains at most two points. \citet{kleiner2024extreme} also study extreme points of the mean-preserving contraction (or “fusion”) set, but their analysis focuses primarily on finitely supported extreme points. Theorem \ref{Th2} extends these findings in two directions. First, it accommodates full-revelation regions (singleton cells) and lower-dimensional pooling regions that naturally arise from the convex partition in Theorem \ref{Th1}. Second, using an approximation argument, it shows that any mean-preserving contraction can be approximated by finitely supported ones within each convex cell. This allows the techniques used to characterize finitely supported extreme points to be lifted to the general case, yielding a full characterization of all extreme points.
\par
Section \ref{sec4} examines the limits of this characterization when Assumption \ref{As1} fails. In such cases, the relative boundaries of the convex cells may carry positive prior mass, allowing martingale transports to exploit boundary interactions and mix mass across otherwise distinct concentration regions. Although a version of the local affine-independence condition remains necessary for extremality, it is no longer sufficient: boundary points introduce additional degrees of freedom that can break global extremality even when each local component is well-behaved. This analysis clarifies the precise role of Assumption \ref{As1} in our geometric characterization and highlights the subtleties that arise in multidimensional persuasion when boundary effects cannot be ruled out.
\par
The concentration property introduced in this paper is closely related to ideas from the optimal transport literature, particularly the notion of irreducible components developed by \citet{de2019irreducible} and \citet{obloj2017structure}. Both papers analyze the multidimensional martingale transport problem and establish a common structural result: any feasible martingale transport between two probability measures in convex order can be decomposed into a family of transports, each acting exclusively on a specific subset of the state space. These subsets define the irreducible components. \citet{de2019irreducible} further show that this decomposition is minimal: the subsets on which the family of transports act form a convex partition of the state space, and no strictly finer convex partition can support all feasible transports. \citet{ciosmak2023localisation} extends this line of work to a broader class of stochastic orders, including convex order, demonstrating that irreducible components exist well beyond the martingale setting. He also resolves a conjecture of \citet{obloj2017structure} concerning polar sets(the infeasible sets of supports for all martingale couplings) for martingale transports, a result that highlights the substantial flexibility available in constructing signals in moment persuasion environments.
\par
Beyond characterizing the geometry of mean-preserving contractions, we also study the implementability of these contractions through martingale transports. Section \ref{sec5} establishes that, if there exists only one full dimensional concentration region and Assumption 1 holds, the set of martingale transports between $\mu$ and $\mu_0$ is remarkably flexible: any pair of points in the supports of $\mu$ and $\mu_0$ can be connected by some feasible transport, and the only polar sets are those lying entirely outside these supports. This result implies that for any dominated measure concentrated on a convex cell, there exists a martingale transport whose conditional distributions span the entire support of the prior. Such implementational richness is crucial for our characterization of extreme points, as it ensures that the geometric constraints identified in Section \ref{sec2} are the only constraints governing feasible posterior-mean distributions. It also clarifies the relationship between convex order and feasible information structures, strengthening the link between our geometric analysis and applications in persuasion.
\par
In Section \ref{sec6}, we apply our characterization of the extreme points of the mean-preserving contraction set to moment persuasion, a special case of the Bayesian persuasion problem. In this setting, the sender's utility is state-independent, and the receiver's optimal action depends solely on her posterior mean(or certain moments of the posterior belief). Under these assumptions, the optimal information structure corresponds precisely to the extreme points of the mean-preserving contraction set characterized in this paper. Within the broader Bayesian persuasion literature, two papers are particularly relevant: \citet{dworczak2024persuasion} and \citet{malamud2021persuasion}, both of which study multidimensional moment persuasion. Our approach is complementary to theirs. Relative to \citet{dworczak2024persuasion}, who provide a constructive method for deriving optimal signals,and \citet{malamud2021persuasion}, who allow for non-compact state spaces, our analysis yields the finest possible description of the geometry of optimal information structures.

\section{Characterization}\label{sec2}
\par
\subsection{ Preliminaries and Notation}
For a set $A \subset \mathbb{R}^n(n \geq 1)$, let $ri(A)$ denote its relative interior, $rbd(A)$ denote its relative boundary and $conv(A)$ denote its convex hull. For a probability measure $\mu$ on $\mathbb{R}^n$, let $supp(\mu)$ denote its support, and $\hat{supp}(\mu):= cl (conv (supp (\mu)))$ be the smallest convex closed set containing support of $\mu$. $\delta_x$ denotes a Dirac measure concentrated on $x \in \mathbb{R}^n$. For measurable $A \subset \mathbb{R}^n$, $\mu_A$ denotes the restriction of $\mu$ to $A$.
\par 
We say a collection of subsets of $\Omega$, $\{A_x\}_{x\in \Omega}$, is \textbf{convex partition} of $\Omega$ if (1)For any $x \in \Omega$, $x \in A_x$ and $A_x$ is a convex set, (2) $\cup_{x\in \Omega} A_x$=$\Omega$, (3)  For any $x,y \in \Omega$, either $A_x=A_y$, or $A_x\cap A_y=\emptyset$. Given a convex partition $\{A_x\}_{x\in \mathbb{R}^n}$ of $\mathbb{R}^n$, we define the conditional probability measure $\mu_{\mid A_x}$\footnote{It coincides with the normalized restriction of $\mu$ on $A_x$, $\frac{\mu_{A_x}}{\mu(A_x)}$ if $\mu(A_x)>0$} on $A_x$ as follows, for any bounded measurable $f:\mathbb{R}^n \rightarrow \mathbb{R}$,
$$\int f d\mu =\int_{\{A_x\}_{x\in \mathbb{R}^n}}\int f d\mu_{\mid A_x} d\lambda(A_x),$$
where $\lambda(A)=\mu(\{x:A_x \subset A\})$. Notice by disintegration theorem, $\mu=\int_{\{A_x\}_{x\in \mathbb{R}^n}} \mu_{\mid A_x}d\lambda(A_x)$.

\par
Consider two probability measures $\mu$ and $\mu_0$ on $\mathbb{R}^n$ with finite first-order moment, we say $\mu$ and $\mu_0$ are in the \textbf{convex order}, written by $\mu \preccurlyeq \mu_0$, if $\int fd\mu \leq \int fd\mu_0 $ for all convex functions $f:\mathbb{R}^n \rightarrow \mathbb{R}$. Let $MPC(\mu_0)$ denote the set of all probability measures that are dominated by $\mu_0$ in the convex order. Define $\Pi(\mu, \mu_0):=\{p \in \Delta(X \times \Omega) \mid p_X=\mu, p_\Omega=\mu_0\},$ which contains all the possible probability measures on $X \times \Omega$ given the marginal on $X$ is $\mu$ and the marginal on $\Omega$ is $\mu_0$.\footnote{Here $X= \Omega =\mathbb{R}^n$, we will make additional assumptions on $\Omega$ through the paper when necessary. } The subset of \textbf{martingale transports} is $$\mathcal{M}(\mu, \mu_0):=\{p \in \Pi(\mu, \mu_0) \mid \mathbb{E}[Y\mid Z]=Z \text{ for }(Z,Y)\sim p\}.$$ Notice by \citet{strassen1965existence}, $\mathcal{M}(\mu, \mu_0)$ is nonempty if and only if $\mu \in MPC(\mu_0)$.

For any given $x \in X$  and $p \in \mathcal{M}(\mu, \mu_0)$, the conditional kernel $p_x$ is defined $\mu$-a.e. by
$$p(dx,d\omega)=\mu(dx) \otimes p_x(d\omega)$$
\par
A finite collection $\{x_1,..., x_k\} \subset \mathbb{R}^n$ is \textbf{affinely independent} if the unique solution to $\sum^k_{i=1}\beta_ix_i=0$ and $\sum^k_{i=1}\beta_i=0$ is $\beta_1=\beta_2=...=\beta_k=0$. A set of affinely independent vectors contains at most $n + 1$ elements.

\subsection{Geometric Characterization}
\par
To motivate the multidimensional characterization, recall the elegant one-dimensional description of mean-preserving contractions. Assume $\Omega = [0,1]$, a distribution $G \in \Delta([0, 1])$ is a mean-preserving contraction of $F \in \Delta([0, 1])$ if and only if $\int^x_0 G(x) dx \leq \int^x_0 F(x) dx$ for all $x \in [0, 1]$ and equality holds at $x=1$.

If $\mu$ is continuous distribution, accoridng to the proof of Proposition 1 in \citet{arieli2023optimal}, these inequalities partition the state space into “local” intervals and singleton points:
\begin{itemize}
\item On each ``local'' interval $(a,b) \subset [0, 1]$, $\int^b_a \mu(x) dx = \int^b_a \mu_0(x) dx$, while $\int^c_a \mu(x) dx < \int^c_a \mu_0(x) dx$ for all $c \in (a,b)$;
\item At each singleton $x$,$\int^x_0 \mu(x) dx = \int^x_0 \mu_0(x) dx$.
\end{itemize}
Based on this ``local'' structure, \citet{arieli2023optimal} prove that the extremality of $G$ implies the support of $G$ contains at most 2 points on each ``local'' interval. 
\par
In higher dimensions, no direct analogue of these integral constraints exists. The challenge is therefore to identify an appropriate multidimensional analogue of these “local” intervals. This motivates 
the whole paper of \citet{de2019irreducible}, which provides the structural foundation for our analysis.
\begin{Proposition} \label{pr1}
      (March and Touzi 2019) There exists $\hat{p} \in \mathcal{M}(\mu,\mu_0)$ such that for all $p \in \mathcal{M}(\mu, \mu_0)$, $\hat{supp} (p_x)  \subset \hat{supp} (\hat{p}_x)$ for $\mu$-a.s. x.
      Furthermore $\hat{supp} (\hat{p}_x)$ is $\mu$ -a.s. unique, and we may choose this kernel so that:
      \begin{itemize}
      
      \item $x \rightarrow \hat{supp} (\hat{p}_x)$ is analytically measurable $\mathbb{R}^n \rightarrow \hat{\mathcal{K}}$, \footnote{The set $\hat{\mathcal{K}}$ of all convex closed subsets of $\mathbb{R}^n$ is a Polish space when endowed with the Wijsman topology, see \citet{de2019irreducible} for an explanation of the Wijsman topology} 
     \item  $x \in A_x:= ri\hat{supp} (\hat{p}_x)$ for all $x \in R^n$, and $\{ri\hat{supp}(\hat{p}_x), x\in R^n\}$ is a partition of $R^n$.     
      \end{itemize}
\end{Proposition}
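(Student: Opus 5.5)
The plan follows De March and Touzi's argument: first build a ``maximal'' martingale transport, then take its conditional convex hulls as the cells.

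\textbf{Step 1: a maximal transport.} The set $\mathcal{M}(\mu,\mu_0)$ is convex and nonempty by Strassen's theorem. It is also weakly compact, since marginals with finite first moments give tightness and uniform integrability. For $p\in\mathcal{M}(\mu,\mu_0)$, consider the functional
$$\Phi(p)=\int \phi\big(\hat{supp}(p_x)\big)\,\mu(dx),$$
where $\phi$ is a bounded, strictly increasing (with respect to inclusion) measurable functional on $\hat{\mathcal{K}}$. A natural choice is $\phi(K)=\int g(d(y,K))\,\gamma(dy)$ with $\gamma$ Gaussian and $g$ decreasing and bounded.

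Take a maximizing sequence $p^k$ and form the mixture $\hat p=\sum_k 2^{-k}p^k$. This lies in $\mathcal{M}$ because the martingale constraint and the marginal constraints are linear. Its kernel is $\hat p_x=\sum_k 2^{-k}p^k_x$, so $\hat{supp}(\hat p_x)\supset\hat{supp}(p^k_x)$ for every $k$ and $\mu$-a.s. $x$.

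\textbf{Step 2: dominance and uniqueness.} Suppose some $p$ had $\hat{supp}(p_x)\not\subset\hat{supp}(\hat p_x)$ on a set of positive $\mu$-measure. Then $\tfrac12(p+\hat p)$ would strictly increase $\Phi$, which contradicts maximality once the supremum is attained. Attainment requires the convexification argument just given together with monotone convergence of $\phi$. Almost-sure uniqueness follows because any two maximal kernels must contain each other a.s.

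\textbf{Step 3: measurability.} Analytic measurability of $x\mapsto\hat{supp}(\hat p_x)$ is obtained by writing the map as a composition. The kernel $x\mapsto p_x$ is Borel into $\Delta(\mathbb{R}^n)$. The map $\nu\mapsto cl\,conv\,supp(\nu)$ is Borel into $\hat{\mathcal{K}}$ with the Wijsman topology, which can be checked through the distance functions $d(y,\cdot)$.

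\textbf{Step 4: the partition property.} This is the main obstacle: the sets $ri\,\hat{supp}(\hat p_x)$ must contain $x$ and must be pairwise disjoint or equal.

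That $x\in ri\,\hat{supp}(\hat p_x)$ a.s. uses the martingale property. The barycenter of $\hat p_x$ is $x$, and the barycenter of a measure always lies in the relative interior of its closed convex support hull. On the residual null set, and off $supp(\mu)$, we redefine the cell to be $\{x\}$, or to be the existing cell containing $x$ if there is one.

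Disjointness is the delicate part. I would argue by contradiction. Suppose two relative interiors $ri K_x$ and $ri K_{x'}$ intersect without being equal, on a set of pairs of positive measure. Then one can mix the transports on a neighborhood of the overlap: reroute a small amount of mass of $\hat p_x$ toward points of $K_{x'}$, using the fact that points in the overlap can be reached from both cells. This produces a martingale transport with strictly larger conditional supports, contradicting Step 2.

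Making this rigorous needs a measurable selection of such pairs. It also needs the ``tangent convex'' description of De March and Touzi, namely that the cells are the relative interiors of the smallest convex sets on which a $\mu_0-\mu$ convex-potential difference is affine. With that description, the cells are components of the domain of a convex-order potential and are therefore disjoint or equal. I would invoke their Theorem 2.1 structure at exactly this step rather than re-derive it.
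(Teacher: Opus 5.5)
The paper gives no proof of this proposition. It is quoted from De March and Touzi (their Theorem 2.1), so a citation is the paper's entire argument. Judged as an actual proof, your Steps 1--3 are sound and are essentially their maximization device. The mixture $\hat p=\sum_k 2^{-k}p^k$ attains $\sup\Phi$ directly by inclusion-monotonicity, so no monotone-convergence step is needed. Comparing with $\tfrac12(p+\hat p)$ then gives dominance and $\mu$-a.s.\ uniqueness. Your $\phi(K)=\int g(d(y,K))\,\gamma(dy)$ works provided $g$ is \emph{strictly} decreasing, and the barycenter-in-relative-interior fact is correct. However, the statement requires measurability and the partition property for the \emph{same} kernel. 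Step 3 only covers the unmodified hull map, not the kernel after the null-set reassignment in Step 4. Sending a leftover $x$ to ``the cell containing it'' is a projection-type operation, and its measurability is left unaddressed.

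The genuine gap is the partition clause, which is the substantive content. Even if your rerouting argument were completed, it would only show that $B=\{(x,x'):\mathrm{ri}K_x\cap\mathrm{ri}K_{x'}\neq\emptyset,\ K_x\neq K_{x'}\}$ is $\mu\otimes\mu$-null, because $\Phi$ only detects perturbations on sets of positive measure. The proposition needs a $\mu$-full set $S$ with $(S\times S)\cap B=\emptyset$, i.e.\ a statement about every pair. A product-null relation need not admit such an $S$: for Lebesgue measure on $[0,1]$, $\{x-x'\in\mathbb{Q}\setminus\{0\}\}$ is product-null yet meets $S\times S$ for every full $S$, by Steinhaus. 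This is the typical situation for diffuse $\mu$ with lower-dimensional, $\mu$-null cells. A support or continuity upgrade is also unavailable, because the crossing relation is not open: shift one of two overlapping collinear segments off their common line and they become disjoint.

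The mechanism itself is misstated. A point $y\in\mathrm{ri}K_x\cap\mathrm{ri}K_{x'}$ need not lie in $supp(\hat p_x)$, or even in $supp(\mu_0)$, so it is not ``reached'' by either cell. What does work, for positive-measure families of pairs, is swapping sub-measures of $\hat p_x$ and $\hat p_{x'}$ that share the barycenter $y$. This preserves both marginals and both conditional means.

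Falling back on ``their Theorem 2.1 structure'' is circular, since that theorem is this proposition, partition included. The description you attribute to it is also off. In one dimension the components are the intervals where $u_{\mu_0}-u_\mu>0$, with $u_\nu(t)=\int|t-y|\,\nu(dy)$, and that difference is not affine there. What is missing is an ingredient that constrains all cells simultaneously, so that disjoint-or-equal holds for every pair in a full set rather than for almost every pair. In De March and Touzi this role is played by their tangent-convex-function description, built from $f(y)-f(x)-p(x)\cdot(y-x)$, whose differences across $x$ are affine in $y$. Establishing that description is the real work.
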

\par
Proposition \ref{pr1} states that the state space can be partitioned into convex cells such that every martingale transport must map each state $x$ into the closure of its own cell. These are the multidimensional analogues of the ``local intervals'' in the one-dimensional case, which called irreducible components by \citet{de2019irreducible}. A direct corollary of Proposition \ref{pr1} is that $\mu$ coincides with $\mu_0$ on all $A_x$ that are singletons and not on the relative boundary of other cells of the convex partition.
\begin{Corollary}\label{Co1}
$\mu_{ \{R ^n \backslash cl(A)\} }= \mu_{0  \{R^n \backslash cl(A)\}}$, where $A:= \mathbb{R}^n \backslash \{x \in \mathbb{R}^n: A_x=\{x\}\}$.
\end{Corollary}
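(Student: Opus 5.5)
Fix $\hat p\in\mathcal{M}(\mu,\mu_0)$ as in Proposition \ref{pr1}. Write $S:=\{x: A_x=\{x\}\}$ for the singleton cells, so $A=\mathbb{R}^n\setminus S$, and put $U:=\mathbb{R}^n\setminus cl(A)$. The set $U$ is open and $U\subset S$. The plan is to show that on $U$ the kernel $\hat p_x$ is the Dirac mass $\delta_x$, and that no mass of $\mu_0$ on $U$ can come from points $x\notin U$. Together these give $\mu_U=\mu_{0U}$.

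\textbf{Step 1: $\hat p_x=\delta_x$ for $\mu$-a.e.\ $x\in U$.} For such $x$ we have $\hat{supp}(\hat p_x)=cl(A_x)=\{x\}$, using the $\mu$-a.s.\ identification in Proposition \ref{pr1}. So $\hat p_x$ is concentrated on $\{x\}$ and equals $\delta_x$. In particular the part of $\hat p$ over $U\times\mathbb{R}^n$ is $\int_U \delta_x\,\mu(dx)$, and its second marginal is $\mu_U$. This gives $\mu_0(B)\ge \mu(B)$ for every Borel $B\subset U$.

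\textbf{Step 2: mass from outside $U$ does not land in $U$.} For $\mu$-a.e.\ $x\notin U$, the kernel $\hat p_x$ is supported in $cl(A_x)$. There are two cases.
\begin{itemize}
\item If $x\in A$, then $A_x\subset A$, so $cl(A_x)\subset cl(A)$, which is disjoint from $U$.
\item If $x\in S\setminus U$, then $x\in cl(A)$ and $cl(A_x)=\{x\}\subset cl(A)$, again disjoint from $U$.
\end{itemize}
Hence $\hat p_x(U)=0$ for $\mu$-a.e.\ $x\notin U$.

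\textbf{Step 3: conclude.} Let $B\subset U$ be Borel. Then
\[
\mu_0(B)=\int \hat p_x(B)\,\mu(dx)=\int_U \delta_x(B)\,\mu(dx)+\int_{U^c}\hat p_x(B)\,\mu(dx)=\mu(B)+0 .
\]
Therefore $\mu_U=\mu_{0U}$.

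The main obstacle is the measure-theoretic care in Step 1 and Step 2. Proposition \ref{pr1} gives $\hat{supp}(\hat p_x)=cl(A_x)$ only for $\mu$-a.e.\ $x$, and the partition is chosen from the version of the kernel fixed there. I will handle this by working with that chosen kernel throughout and discarding a single $\mu$-null set. I will also need the measurability of $x\mapsto\hat p_x(U)$, which follows from the disintegration of $\hat p$.
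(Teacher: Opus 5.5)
Your proof is correct and follows the same route as the paper. Singleton cells force $\hat p_x=\delta_x$, and kernels of points in non-singleton cells are carried by $cl(A)$. Disintegrating $\mu_0(B)=\int \hat p_x(B)\,\mu(dx)$ for Borel $B\subset \mathbb{R}^n\setminus cl(A)$ then leaves only the Dirac part, which equals $\mu(B)$. The only difference is that you split the integral over the open set $U$ and its closed complement, whereas the paper splits over $A$ and $\mathbb{R}^n\setminus A$; your split is slightly cleaner because it avoids integrating over the possibly non-Borel set $A$.
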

\textbf{Proof:}  By Proposition \ref{pr1}, $\{A_x\}$ form a convex partition of $\mathbb{R}^n$ such that $\hat{p}_x(cl(A_x))=1$ for $\mu$-a.e. x. If $x \in  \mathbb{R}^n \backslash A$, then $A_x=\{x\}$ so that $\hat{p}_x=\delta_x$. If $x \in A$, then $A_x \subset A$ so that $ supp(\hat{p}_x) \subset cl(A)$. 
\par
If $B \subset \mathbb{R}^n$ is a Borel set, then $\mu_0(B)= \int_{\mathbb{R}^n} \mu(dx) \hat{p}_x(B)= \int_A \mu(dx) \hat{p}_x(B \cap cl(A))+ \int_{\mathbb{R}^n \backslash A}\mu(dx) \delta_x(B)$. 
Moreover, if $B \subset \mathbb{R}^n \backslash cl(A)$,  $\mu_0(B)= \int_{\mathbb{R}^n \backslash \mathcal{A}}\mu(dx) \delta_x(B)=\mu(B)$ $\hfill\blacksquare$ 
\par
If $n=1$ and $\mu_0$ is a continuous probability measure, then in $\mu_0$-a.s. sense, $\mathbb{R}^n \backslash cl(A)$ is the same set of singleton points, and all the non-singleton $A_x$ are exactly the ``local'' intervals.  The following example shows how the multidimensional analogues of the ``local intervals'' look like in two dimenisons. 

\begin{Example}\label{example1}
Let $X = \Omega= [0,1] \times [0,1]$, $a_1=(0,0)$, $a_2=(0,1)$, $a_3=(1,0)$, $a_4=(1,1)$, $b_1=(\frac{1}{3},\frac{1}{3})$, $b_2=(\frac{1}{2},\frac{1}{2})$,$b_3=(1,0)$ $b_4=(1,1)$, let $\mu_0(a_1)=\frac{1}{12}$, $\mu_0(a_2)=\frac{5}{24}$, $\mu_0(a_3)=\frac{11}{24}$, $\mu_0(a_4)=\frac{1}{4}$, and $\mu(b_1)=\frac{1}{4}$, $\mu(b_2)=\frac{1}{4}$, $\mu(b_3)=\frac{1}{4}$,$\mu(b_4)=\frac{1}{4}$.\

\begin{figure}[htbp]\caption{Example 1}
\centering
\includegraphics[width=0.5\textwidth]{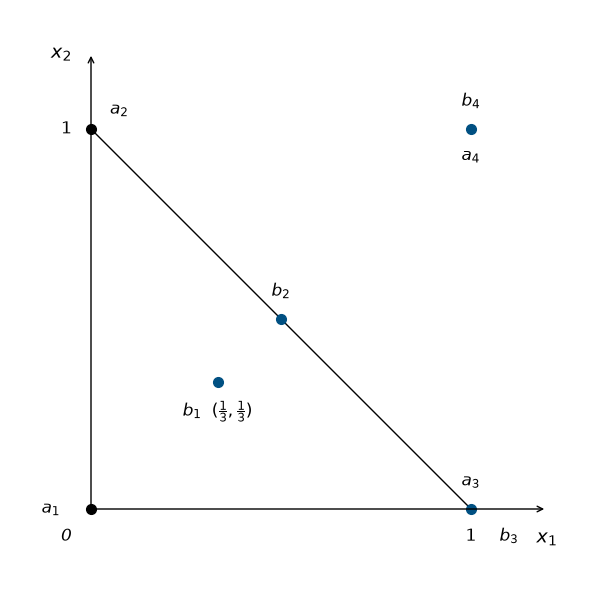}
\end{figure}
\end{Example}
\par
The above example gives two probability measures $\mu$ and $\mu_0$ that are in the convex order. The support of $\mu$ includes 4 points, $b_1$,$ b_2$, $b_3$, $b_4$. Each of them belongs to a different cell of the convex partition: $A_{b_1}=riconv(a_1,a_2,a_3)$, $A_{b_2}=riconv(a_2,a_3)$,$A_{b_3}=\{a_3\}$,$A_{b_4}=\{a_4\}$ and $A_x=\{x\}$ for $x \in X\backslash\{b_1, b_2, b_3, b_4\}$. Any martingale transport between $\mu$ and $\mu_0$ can only map from the closure of $A_{b_i}$ to $b_i$, $i=1,2,3,4$. In fact, here we have only one $p \in \mathcal{M}(\mu, \mu_0)$. For $b_1$, $p_{b_1}(a_1)=\frac{1}{3}$, $p_{b_1}(a_2)=\frac{1}{3}$ and $p_{b_1}(a_3)=\frac{1}{3}$, for $b_2$, $p_{b_2}(a_2)=\frac{1}{2}$ and $p_{b_2}(a_3)=\frac{1}{2}$, for $b_3$, $p_{b_3}(a_3)=1$, for $b_4$, $p_{b_4}(a_4)=1$. 
\par
Notice $A_{b_4}=\{a_4\}$ is a singleton and is not on the relative boundary of other $A_{b_i}$,so $\mu(\{b_4\})=\mu_0 (\{a_4\})$ as Corollary \ref{Co1} proves. Though $A_{b_3}=\{a_3\}$ is a singleton, it is on the relative boundary of $A_{b_1}$ and $A_{b_2}$, so the two probability measures need not concide: $\mu_(\{a_3\})  \neq \mu_{0} (\{a_3\})$.

\par
From now on, we assume $X=\Omega$ are $n$-dimensional convex compact subsets of $\mathbb{R}^n$. To simplify our language in describing the local structure of all $\mu \in MPC(\mu_0)$, we have the following definition.
\begin{Definition}
Given a convex and relatively open set $A \subset X \subset \mathbb{R}^n$, $\mu_{\mid A}$ is \textbf{concentrated} on $cl(A)$ if
\begin{itemize}
\item for all $p \in \mathcal{M}(\mu, \mu_0)$ and all $x \in supp(\mu_{\mid A})$, $\hat{supp} (p_x)  \subset cl(A)$,
\item there exists $\hat{p} \in \mathcal{M}(\mu, \mu_0)$ such that $\hat{supp} (\hat{p}_x)=cl(A)$ for all $x \in supp(\mu_{\mid A})$
\end{itemize}
\end{Definition}
In particular, if there is only one n-dimensional concentration region, we say $\mu$ is concentrated on $X$ if $\mu_{\mid ri(X)}$ is concentrated on $X$.
\par
To streamline the geometric characterization, we introduce a mild regularity condition ensuring that the relative boundaries of the convex cells in the partition are negligible. Recall that Proposition \ref{pr1} provides a convex partition $\{A_x\}_{x \in X}$ of the state space, where each $A_x=ri(\hat{supp}(\hat{p}_x))$
for a canonical martingale transport $\hat{p}$. The potential difficulty arises when the relative boundary of one cell intersects with relative boundary of another cell, creating additional degrees of freedom for martingale transports and complicating the analysis of extreme points. As point $a_3$ in example \ref{example1}, there exists a martingale transport(in fact, it is the unique element in $\mathcal{M}(\mu, \mu_0)$ provided above) mapping it to  $b_1$, $b_2$, $b_3$, each of which belongs to different concentration regions.
\par
To rule out these boundary-interaction pathologies, we impose the following assumption:
\begin{Assumption}\label{As1}
$\mathcal{V}$ is $\mu_0$-null set, where $\mathcal{V}:= \cup_{\{x \in X: A_x \neq \{x\}\}} rbd(A_x)$.
\end{Assumption}
Assumption \ref{As1} means the union of all relative boundaries of non-singleton cells has zero $\mu_0$-measure. This assumption eliminates the possibility that a point lies simultaneously on the boundary of multiple convex cells in a way that affects feasible martingale transports. We delay our more detailed discussion of Assumption \ref{As1} in section \ref{subsec23}. 
\par
Except those separable singletons as corollary \ref{Co1} characterizes, if all the relative boundaries of $A_x$ are negligible, then combining proposition \ref{pr1} and corollary \ref{Co1}, we can have a pretty simple geometric characterization of convex order.
\begin{Theorem}\label{Th1}
     Suppose Assumption \ref{As1} holds, then $\mu \in MPC(\mu_0)$ if and only if there exists a collection of disjoint convex and relatively open sets $\{A_i\}_{i\in I}$\footnote{Each $A_i$ corresponds to each non-singleton $A_x$ that does not lie on the relative boundary of  another cell in Proposition \ref{pr1}} such that $\mu$-a.s.
\begin{itemize}
  \item $\mu_{\mid A_i}\preccurlyeq \mu_{0\mid A_i}$, moreover, for any $i \in I$, $\mu_{\mid A_i}$ is concentrated on $cl(A_i)$.  
\item $\mu_{\{\Omega \backslash A\}}= \mu_{0 \{\Omega \backslash A\}}$, where $A=\bigcup_iA_i$
\end{itemize}
\end{Theorem}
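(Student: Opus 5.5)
The plan is to prove the two directions separately. The \emph{only if} direction carries the content and follows from Proposition \ref{pr1} together with Corollary \ref{Co1}; the \emph{if} direction is a gluing argument.

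For \emph{necessity}, suppose $\mu \in MPC(\mu_0)$. By Strassen, $\mathcal{M}(\mu,\mu_0)\neq\emptyset$, so Proposition \ref{pr1} yields the canonical kernel $\hat p$ and the convex partition $\{A_x\}$ with $A_x = ri\,\hat{supp}(\hat p_x)$. Take $\{A_i\}_{i\in I}$ to be the distinct non-singleton cells. We then have three things to show.

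\textbf{(a) Localization.} By Assumption \ref{As1}, $\mu_0(\mathcal V)=0$. For $\mu$-a.e.\ $x\in A_i$ the kernel $\hat p_x$ is supported on $cl(A_i)=A_i\cup rbd(A_i)$, and $rbd(A_i)\subset\mathcal V$ is $\mu_0$-null. Because $\mu_0=\int \hat p_x\,\mu(dx)$, the mass $\int \hat p_x(\mathcal V)\,\mu(dx)$ equals $0$, so $\hat p_x(A_i)=1$ for $\mu$-a.e.\ $x\in A_i$. Next we check the converse flow: the only states $x$ whose kernels can charge $A_i$ are those with $x\in A_i$. A point $x$ outside $A_i$ either has $A_x$ a different non-singleton cell, whose kernel lives in $cl(A_x)$ and so meets $A_i$ only inside $rbd$ sets (which are null), or has $A_x=\{x\}$, in which case $\hat p_x=\delta_x\notin A_i$. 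Hence $\mu_{0A_i}=\int_{A_i}\hat p_x\,\mu(dx)$. This shows $\mu(A_i)=\mu_0(A_i)$ and that the restriction $\hat p|_{A_i\times A_i}$, normalized, belongs to $\mathcal M(\mu_{|A_i},\mu_{0|A_i})$. Strassen then gives $\mu_{|A_i}\preccurlyeq\mu_{0|A_i}$. Since the map $x\mapsto A_x$ is analytically measurable, the disintegration is legitimate even when $I$ is uncountable; measurability issues should be handled through the disintegration over $\lambda$ rather than by summing over $I$.

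\textbf{(b) Concentration.} This is immediate from Proposition \ref{pr1}. Every $p\in\mathcal M(\mu,\mu_0)$ satisfies $\hat{supp}(p_x)\subset\hat{supp}(\hat p_x)=cl(A_i)$, and $\hat p$ itself attains equality. The only care needed is that the definition says ``for all $x\in supp(\mu_{|A_i})$'' while Proposition \ref{pr1} holds only $\mu$-a.s.; the statement's ``$\mu$-a.s.'' qualifier absorbs this.

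\textbf{(c) Outside $A$.} On $\Omega\setminus A$ the cells are singletons. By (a), the mass of $\mu_0$ on $\Omega\setminus A$ comes only from $\delta_x$ kernels, except possibly on $\bigcup_i rbd(A_i)\subset\mathcal V$, which is $\mu_0$-null. So Corollary \ref{Co1}'s argument, now applied with $\mathcal V$ negligible instead of removing $cl(A)$, gives $\mu_{\Omega\setminus A}=\mu_{0\,\Omega\setminus A}$. Some care is needed here, because $\mu$ might still put mass on $\mathcal V$ while $\mu_0$ does not. Since $\mu\preccurlyeq\mu_0$ and points of $\Omega\setminus A$ carry identity kernels, any $\mu$-mass at $x\in\mathcal V\setminus A$ forces $\mu_0$-mass at $x$, which is zero. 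So $\mu(\mathcal V\setminus A)=0$ as well.

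For \emph{sufficiency}, given such $\{A_i\}$, choose $p^i\in\mathcal M(\mu_{|A_i},\mu_{0|A_i})$. These exist by Strassen, and the concentration clause even supplies one directly. Then define
\[
p=\int \mu(A_i)\,p^i\,d\lambda+(\mathrm{id},\mathrm{id})_\#\mu_{\Omega\setminus A}.
\]
Its marginals are $\mu$ and $\mu_0$, using the second bullet together with $\mu(A_i)=\mu_0(A_i)$, which is implicit in comparing normalized restrictions. The martingale property holds cellwise, so $\mu\in MPC(\mu_0)$. The main obstacle I expect is step (a): rigorously showing that no kernel from outside $A_i$ deposits $\mu_0$-mass in $A_i$, and that the measurable selection of $p^i$ works over an uncountable family. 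The first point rests entirely on the partition property together with $\mu_0(\mathcal V)=0$. For the second, I would try to avoid selection altogether by restricting the single global $\hat p$ rather than choosing a separate $p^i$ for each cell.
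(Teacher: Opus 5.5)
Your necessity argument follows the paper's route: Proposition~\ref{pr1}, then $\mu_0(\mathcal V)=0$, then the kernel computation behind Corollary~\ref{Co1}. Your bookkeeping is actually tighter than the paper's. The paper passes through $cl(A)=A\cup\mathcal V$ and $\mu(\mathcal V)=\mu_0(\mathcal V)=0$, and both claims can fail.

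The first fails with infinitely many cells, e.g.\ segments $\{1/k\}\times(0,1)$ accumulating on a fully revealed segment $\{0\}\times[0,1]$ of positive $\mu_0$-mass. The second fails because $\mu$ may charge a point of one cell that is an endpoint of another. Take $\mu_0=\frac12u_H+\frac12u_V$, with $u_H,u_V$ uniform on $H=[0,1]\times\{0\}$ and $V=\{1\}\times[-1,1]$, and $\mu=\frac12\delta_{(1/2,0)}+\frac12\delta_{(1,0)}$. The cells are $ri(H)$ and $ri(V)$ and Assumption~\ref{As1} holds, yet $\mu(\mathcal V)\geq\frac12$.

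Your identity needs neither claim: for Borel $B\subset\Omega\setminus A$,
$\mu_0(B)=\int_A\hat p_x(B)\,\mu(dx)+\int_{\Omega\setminus A}\delta_x(B)\,\mu(dx)=\mu(B)$.
It uses only $\hat p_x(\mathcal V)=0$ for $\mu$-a.e.\ $x$ and $\hat p_x=\delta_x$ on singleton cells. It also already contains your extra step $\mu(\mathcal V\setminus A)=0$ as the case $B=\mathcal V\setminus A$.

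For sufficiency you take a genuinely different route. You glue cellwise martingale couplings and invoke Strassen, whereas the paper integrates a convex test function:
$\int\phi\,d\mu=\int_{\Omega\setminus A}\phi\,d\mu+\int_A\phi\,d\mu\le\int_{\Omega\setminus A}\phi\,d\mu_0+\int_A\phi\,d\mu_0$.
The test-function route removes the obstacle you anticipated. With uncountably many cells you fix $\phi$ and integrate $\int\phi\,d\mu_{\mid A_x}\le\int\phi\,d\mu_{0\mid A_x}$ against $\lambda$, so no measurable selection of couplings is needed.

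Your fallback of restricting ``the single global $\hat p$'' is circular in this direction, since an element of $\mathcal M(\mu,\mu_0)$ is exactly what must be produced. Read literally, the paper's definition of ``concentrated'' already asserts such an element exists, so this direction is then immediate from Strassen and the gluing is superfluous. Separately, $\int\mu(A_i)p^i\,d\lambda$ double-counts the cell weights; it should be $\sum_i\mu(A_i)p^i$ or $\int p^{A}\,d\lambda(A)$.

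Finally, both routes need $\mu(A_i)=\mu_0(A_i)$ for each cell, or more precisely that $\mu$ and $\mu_0$ induce the same law $\lambda$ on cells. This is \emph{not} implicit in comparing normalized restrictions. On $\Omega=[0,3]$ with cells $(0,1)$ and $(2,3)$, take $\mu_0=\frac12u_{(0,1)}+\frac12u_{(2,3)}$ and $\mu=\frac34\delta_{1/2}+\frac14\delta_{5/2}$. Both normalized comparisons hold and the measures agree off $A$, yet their means are $\frac32$ and $1$. The paper's proof uses this matching silently too; you should state it as a hypothesis or extract it from the concentration clause.
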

\textbf{Proof:} Necessary condition: define $A_i$ as each non-singleton $A_x$ that does not lie on the relative boundary of  another cell in Proposition \ref{pr1}, since $\mu_{\mid A_i}$ is concentrated on $cl(A_i)$ for all $i$ and $rbd(A_i)$ is $\mu_0$-null set, it is trivial $\mu_{ \mid A_i}\preccurlyeq \mu_{0 \mid A_i}$. Notice $cl(A)=A \cup \mathcal{V}$ and $\mu(\mathcal{V})=\mu_0(\mathcal{V})=0$, so according to corollary \ref{pr1}, $\mu_{ \mid \Omega \backslash \cup_{I \in I}A_i }=\mu_{0 \mid \Omega \backslash \cup_{I \in I}A_i }$. 
\par
Sufficient condition: For any convex function $\phi: X \rightarrow R$,  $\int_X \phi d\mu = \int_{ X \backslash \cup_{I \in I}A_i} \phi d\mu + \int_{\cup_{I \in I}A_i} \phi d\mu \leq \int_{ X \backslash \cup_{I \in I}A_i} \phi d\mu_0 + \int_{\cup_{I \in I}A_i} \phi d\mu_0 =\int_X \phi d\mu_0. $ $\hfill\blacksquare$ 
\par
Theorem \ref{Th1} shows that convex order admits a local decomposition: the global dominance is equivalent to convex dominance on each convex cell of the canonical partition, together with equality outside the non-singleton cells.This mirrors the one-dimensional structure, where the state space decomposes into intervals and singleton points. This local structure is the key to the characterization of extreme points in Section 3.
\subsection{The Generality of Assumption \ref{As1}} \label{subsec23}
\par
Before turning to the characterization of extreme points in Section 3, we discuss the generality of Assumption \ref{As1}, which plays a central role in our geometric decomposition. Since each $A_x$ in the convex partition is a convex set, and since the relative boundary of any convex set has Lebesgue measure zero (\citet{lang1986note}) one might expect Assumption \ref{As1} to hold whenever $\mu_0$ is absolutely continuous with respect to Lebesgue measure. However, the convex partition may contain uncountably many convex sets, and Lebesgue measure is not countably additive over uncountable unions. Consequently, the set $\mathcal{V}$ need not be Lebesgue measurable, and even when it is, it may fail to have measure zero. Our Lemma \ref{lm3} gives an affirmative answer when $n=2$ and a negative result for $n \geq 3$.
\begin{Lemma}\label{lm3}
      If $n=1$, then $\mathcal{V}$ is zero Lebesgue measure. If $n=2$ and $\mathcal{V}$ is Lebesgue measurable, then $\mathcal{V}$ is zero Lebesgue measure. If $n \geq 3$, there exists a convex partition such that $\mathcal{V}$ is compact and has positive Lebesgue measure.
\end{Lemma}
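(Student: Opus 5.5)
For $n=1$, each non-singleton cell $A_x$ is a nonempty open interval, so $rbd(A_x)$ consists of at most two endpoints. Distinct cells are disjoint, and each contains a rational, so there are at most countably many non-singleton cells. Hence $\mathcal{V}$ is countable and therefore Lebesgue-null.

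For $n=2$, I would sort the non-singleton cells by dimension. The two-dimensional cells are disjoint and each has nonempty interior, so there are only countably many of them. By \citet{lang1986note}, each of their relative boundaries is Lebesgue-null, so their union is null. The remaining cells are relatively open segments, and their relative boundaries are pairs of endpoints. The set $E$ of all such endpoints may be uncountable, so a counting argument does not suffice.

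For $E$, I would argue by density. Assume $\mathcal{V}$ (and hence $E$, up to a null set) is measurable with positive measure. Then $E$ has a Lebesgue density point $z$. Near $z$, the segments whose endpoints fill a positive-measure set are pairwise disjoint open segments in the plane. I would partition them into countably many families by length in $[1/k,\infty)$ and by direction in a small arc, choosing the family so that its endpoints still have positive measure. Within one such family, the segments are nearly parallel, have length bounded below, and are disjoint. A transversal line to the common direction then meets each segment at most once, and Fubini applied along the transversal direction shows that the set of endpoints is null: along almost every line in the common direction, each segment contributes only its two endpoints, and disjointness makes these points isolated. This contradicts positive measure.

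The main obstacle is the Fubini step. The segments are only approximately parallel, so I would straighten them by a bi-Lipschitz change of coordinates, or use the classical fact that a disjoint family of segments of length at least $1/k$ has a null endpoint set, as in the Besicovitch/Larman-type results for convex partitions. I would cite the latter if it is available.

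For $n\ge3$, I would use a Larman-type construction, the classical example of a partition of a compact set of positive measure in $\mathbb{R}^3$ into disjoint segments whose endpoints form a set of positive measure. Concretely, I would build a nested sequence of families of thin tubes, at each stage refining them into segments while keeping disjointness, so that in the limit a compact set $K$ of positive measure consists of endpoints of disjoint open segments. Points not lying on any segment are declared singleton cells. This gives a convex partition whose set $\mathcal{V}\supset K$ is compact and of positive measure. Ensuring that the limit segments have nondegenerate length and stay disjoint is the delicate part, and I would follow Larman's original construction, embedding it in higher dimensions by taking products with a cube.
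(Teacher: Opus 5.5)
Your treatment of $n=1$ and $n\ge 3$ follows the paper and is fine. For $n>3$, your product of Larman's $\mathbb{R}^3$ family with a cube, $\{s\times\{w\}\}$, is actually needed, since Larman's theorem is stated in $E^3$. Your reduction of $n=2$ to the end set $E$ of the segment cells is also the paper's reduction, and measurability of $E$ follows from that of $\mathcal{V}$ by completeness.

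The gap is in your own argument that $|E|=0$. You claim that, along lines in the common direction, disjointness makes the endpoints isolated; this is false. Take the closed segments of length $1$ starting at $(0,t)$, $0\le t<1$, with direction $(\sin a_t,\cos a_t)$ and $a_t=\epsilon(1-t)$. They are pairwise disjoint, each within angle $\epsilon$ of vertical, and have equal length. Yet their lower endpoints fill an interval of the line $x=0$. So disjointness alone gives no control on the sections; showing that such lines are rare is the entire difficulty.

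More decisively, nothing in your density/Fubini sketch uses $n=2$. Run verbatim in $\mathbb{R}^3$, where Larman's family is compact so the subfamilies by length and direction have compact end sets, it would show that Larman's end set is null. The missing planar ingredient is this. Two disjoint segments of length at least $2\delta$ whose midpoints are at distance $d$ make an angle $\alpha$ with $\sin\alpha\le d/\delta$. Otherwise their lines meet within distance $\delta$ of both midpoints, and the segments cross. This Lipschitz control of directions on middle portions is what would justify your proposed straightening, and it is exactly what skew segments destroy in $\mathbb{R}^3$.

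A second problem is measurability. Splitting the cells into countably many subfamilies does not make their end sets measurable. You also cannot fall back on outer measure, because the conclusion fails there. Take the disjoint vertical segments $\{x\}\times[f(x),1]$, $x\in[0,1]$, where $f$ is a function whose graph has full outer measure in $[0,1]\times[0,\frac{1}{2}]$; such $f$ exist by a transfinite construction. Their end set has positive outer measure. So the step "pick a subfamily whose endpoints still have positive measure, then show it is null" does not go through, because the hypothesis does not pass to subfamilies.

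As written, your $n=2$ case therefore rests entirely on the fallback citation, which is the paper's route via Larman's planar lemma. To use it you must do two things. First, apply it in a form that needs only measurability of the full end set, not to length-classes separately. Second, reduce from your cells to the lemma's setting of disjoint segments: your cells are pairwise disjoint \emph{relatively open} segments whose closures may meet, through shared endpoints or an endpoint of one cell lying inside another segment cell. The paper singles out this second step as the one requiring care.
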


\par
In two dimensions, the only problematic cells are one-dimensional ones (line segments), since there can be at most countably many two-dimensional cells, their boundaries contribute zero measure.\footnote{Since $\mathbb{R}^n$ is separable, any family of pairwise disjoint nonempty open subsets is at most countable (see Thm. 2.8 in \citet{oxtoby2013measure}). Because each full-dimensional convex cell contains a nonempty open set (its interior), the convex partition has at most countably many n-dimensional cells.} For the one-dimensional cells, \citet{larman1971compact}'s first lemma on disjoint line segments ensures that the union of their endpoints has measure zero. The proof of Lemma \ref{lm3} carefully handles cases where closures intersect, showing that one can enlarge the set of endpoints to a disjoint family of line segments whose endpoints still form a measure-zero set. In contrast, for 
$n \geq 3$, \citet{larman1971compact} shows that one can construct a compact family of disjoint line segments whose set of endpoints has positive Lebesgue measure. Treating all other points as singleton cells yields a convex partition for which $\mathcal{V}$ is compact and has positive measure. This establishes that Assumption \ref{As1} may fail in higher dimensions. Although Lemma \ref{lm3} shows that Assumption \ref{As1} can fail in principle, such failures appear pathological from the perspective of economic applications. 
\par
it is well-known that projection of a convex function is a convex partition of state space. By Rademacher's Theorem (Theorem 10.8 in \citet{villani2008optimal}), we know the subset of the nondifferentiable domain of a convex function is Lebesgue-negligible. This motivates Lemma \ref{lm4}, wich provides another way to verify the Assumption \ref{As1}. Given a convex function $f(x)$ defined on $X$, denote $ND_f:=\{x\mid f \text{ is nondifferentiable at } x\}$,
\begin{Lemma}\label{lm4}
If there exists a convex function $f$ such that $\mathcal{V}$=$ND_f$, then $\mathcal{V}$ is zero Lebesgue measure.
\end{Lemma}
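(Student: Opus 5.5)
The plan is to reduce the statement to a classical measure-theoretic property of convex functions. Let $f$ be a convex function on $X$ with $\mathcal{V}=ND_f$. Since $X$ is a convex compact set with nonempty interior in $\mathbb{R}^n$, I split $\mathcal{V}$ into two parts: $\mathcal{V}\cap \partial X$ and $\mathcal{V}\cap int(X)$. The first part lies in the boundary of a convex set, which has Lebesgue measure zero by \citet{lang1986note}. So only $ND_f\cap int(X)$ needs attention.

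On $int(X)$, a finite convex function is locally Lipschitz. I will prove this by the standard argument: on any compact ball $\bar B\subset int(X)$, $f$ is bounded above by the maximum over the vertices of a simplex containing $\bar B$. It is bounded below by convexity through reflection about the center. A slope bound then gives a Lipschitz constant on a slightly smaller ball. Write $int(X)$ as a countable union of such balls $B_k$. On each $B_k$, Rademacher's theorem (Theorem 10.8 in \citet{villani2008optimal}) gives differentiability of $f$ off a Lebesgue-null set $N_k\subset B_k$.

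It follows that $ND_f\cap int(X)\subset\bigcup_k N_k$, which is null because it is a countable union of null sets. Here differentiability is a local property, so a point of $B_k\setminus N_k$ is a point of differentiability of $f$ itself. Combining this with the boundary part, $\mathcal{V}=ND_f$ is contained in a Lebesgue-null Borel set. It is therefore Lebesgue measurable and has measure zero. Alternatively, in the convex case one could cite Alexandrov's or Mignot's results directly, but Rademacher suffices.

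The only step needing care is measurability and the boundary of $X$. At points of $\partial X$, differentiability is not even well defined in the usual sense, so I deliberately absorb them into the null set $\partial X$ rather than analysing them. If the paper intends $f$ to be defined on an open neighbourhood of $X$, this step disappears. A possible worry is that the statement implicitly concerns $\mu_0$-nullity, as in Assumption \ref{As1}. I would add a remark that the conclusion transfers to $\mu_0$ whenever $\mu_0$ is absolutely continuous with respect to Lebesgue measure, which is how the lemma is used to verify Assumption \ref{As1}.
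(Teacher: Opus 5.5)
Your argument is correct and matches the paper's. The paper gives no separate proof of Lemma \ref{lm4}; it simply appeals to Rademacher's theorem (Theorem 10.8 in \citet{villani2008optimal}) for almost-everywhere differentiability of convex functions. Your version makes explicit what the paper leaves implicit: the local-Lipschitz reduction on a countable cover of $int(X)$, the null set $\partial X$ via \citet{lang1986note}, and measurability of $\mathcal{V}$ through completeness of Lebesgue measure. One small fix: take the balls small enough that the enclosing simplex lies inside $int(X)$.
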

\citet{kleiner2024extreme} consider a special convex partition named power diagram which is essentially a projection of a piecewise linear convex function with each cell of the projection having same dimensions as $X$. Their result is a special case for Lemma \ref{lm4} since the boundary of each cell in the power diagram is the nondifferentiable region of the piecewise linear convex function. 
\par
If $\mu_0$ is finitely supported, Assumption \ref{As1} has a simple and intuitive implication:
\begin{Remark}\label{Re1}
If $\mu_0$ is finitely supported,assumption 1 just means for all $x,y \in supp(\mu)$, either $A_x=A_y$ or $cl(A_x) \cap cl(A_y) =\emptyset$. 
\end{Remark}
In other words, distinct concentration regions must have disjoint closures. In Example \ref{example1}, since $A_{b_2}$ and $A_{b_3}$ lie on the (relatively) boundary of $A_{b_1}$, Remark \ref{Re1} does not hold.

\section{Extreme Points}\label{sec3}
\par
We say a probability measure $\mu \in MPC(\mu_0)$ is finitely supported if $supp(\mu)$ is finite. Let $FMPC(\mu_0)$ denote the set of all finite supported probability measures that belong to $MPC(\mu_0)$. We begin by characterizing the extreme points of $FMPC(\mu_0)$ in the case where all mass lies within a single concentration region.

\begin{Proposition}\label{pr2}
      Suppose $\mu \in FMPC(\mu_0)$ is concentrated on $X$, and Assumption \ref{As1} holds, then $\mu$ is an extreme point of $FMPC(\mu_0)$ only if $\mu$ has affine independent support.
\end{Proposition}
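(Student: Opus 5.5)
We argue by contraposition: if $supp(\mu)=\{x_1,\dots,x_k\}$ is affinely dependent, we write $\mu=\frac12\mu^++\frac12\mu^-$ with $\mu^\pm\in FMPC(\mu_0)$ distinct. Affine dependence gives a nonzero $\beta\in\mathbb{R}^k$ with $\sum_i\beta_i x_i=0$ and $\sum_i\beta_i=0$. We also use the fact that $\mu$ is concentrated on $X$: there is $\hat p\in\mathcal{M}(\mu,\mu_0)$ with $\hat{supp}(\hat p_{x_i})=X$ for every $i$. Since $X$ is $n$-dimensional, each kernel $\hat p_{x_i}$ has full-dimensional convex hull of support, and its barycenter $x_i$ lies in $ri(X)$.

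The perturbation moves mass between the atoms through their kernels. For small $\varepsilon>0$, define
\[
p^\pm:=\sum_{i=1}^k(1\pm\varepsilon\beta_i/\mu(x_i))\,\mu(x_i)\,\delta_{x_i}\otimes\hat p_{x_i}.
\]
Its $X$-marginal is $\mu^\pm=\sum_i(\mu(x_i)\pm\varepsilon\beta_i)\delta_{x_i}$, which is a probability measure because $\sum_i\beta_i=0$. Its barycenters are the same points $x_i$, so it is a martingale. Its $\Omega$-marginal, however, is $\mu_0\pm\varepsilon\sum_i\beta_i\hat p_{x_i}$, which is generally not $\mu_0$. Moving mass between atoms with fixed kernels therefore breaks the $\Omega$-marginal, and it has to be repaired.

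The repair uses the full-support property. Choose $\nu^\pm$ so that the atom locations stay fixed and the kernels are modified to $q_i^\pm$ with $\sum_i(\mu(x_i)\pm\varepsilon\beta_i)q_i^\pm=\mu_0$ and each $q_i^\pm$ of barycenter $x_i$. A cleaner alternative keeps the weights and moves the points: set $x_i^\pm=x_i\pm\varepsilon\gamma_i$ with $\sum_i\mu(x_i)\gamma_i=0$. In that version each kernel must be reshaped to have barycenter $x_i^\pm$ while the mixture $\sum_i\mu(x_i)q_i^\pm=\mu_0$ is preserved. Either version reduces to the same linear problem. We need signed measures $\eta_i$ with $\sum_i w_i\eta_i=0$, $\eta_i(\Omega)$ and $\int\omega\,d\eta_i$ prescribed, and $|\eta_i|\le C\hat p_{x_i}$ so that $\hat p_{x_i}\pm\varepsilon\eta_i\ge0$.

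To construct such $\eta_i$, take $\eta_i=h_i\hat p_{x_i}$ with bounded densities $h_i$. The idea is to exploit that the supports of the $\hat p_{x_i}$ overlap heavily: each has $n$-dimensional convex hull equal to $X$. We fix a common set $S$ of $n+1$ affinely independent points, or small balls around them, charged by all the $\hat p_{x_i}$. By the first-moment constraints, redistributing mass among kernels on $S$ can shift each kernel's mass by $\pm\varepsilon\beta_i$ and adjust each kernel's mean, because affinely independent points allow arbitrary mass and first-moment corrections. This is exactly where $\sum_i\beta_i=0$ and $\sum_i\beta_ix_i=0$ are needed: they ensure that total mass and total first moment removed from some kernels equal what is added to others, so the aggregate $\Omega$-marginal is unchanged. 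The result is $\mu^\pm\in MPC(\mu_0)$, finitely supported, with $\mu^+\neq\mu^-$ and $\mu=\frac12(\mu^++\mu^-)$, contradicting extremality.

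The main obstacle is guaranteeing that the kernels share enough common mass near $n+1$ affinely independent points. Full convex hull of the support alone does not immediately give common charged regions. We would first try to use $\hat{supp}(\hat p_{x_i})=X$ for all $i$ together with Assumption~\ref{As1} to pick points of $supp(\mu_0)$ near the extreme structure of $X$ that lie in every $supp(\hat p_{x_i})$. If that fails, we would replace $\hat p$ by an averaged transport: mixing $\hat p$ with a transport that, for each pair of atoms, gives both kernels positive mass on a common finite set, which the concentration property should supply.
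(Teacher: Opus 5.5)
Your overall strategy (contraposition, keeping the atoms fixed, reweighting them by $\pm\varepsilon\beta$, and repairing the kernels with signed $\eta_i$ satisfying $\eta_i(\Omega)=\beta_i$, $\int\omega\,d\eta_i=\beta_i x_i$, $\sum_i\eta_i=0$) matches the perturbation in the Kleiner et al.\ argument that the paper invokes. Your linear algebra is also correct, \emph{provided} there is a common measure $\rho$ with $\rho\le C\,p_{x_i}$ for every $i$ that charges $n+1$ sets with affinely independent barycenters. In that case you write $\eta_1,\dots,\eta_{k-1}$ as step densities against $\rho$ and set $\eta_k=-\sum_{i<k}\eta_i$; the identities $\sum_i\beta_i=0$ and $\sum_i\beta_ix_i=0$ give $\eta_k$ the right mass and first moment.

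\textbf{The gap.} The existence of such a transport is exactly the step you leave open, and it is the substantive step. Neither of your remedies supplies it.
\begin{itemize}
\item The concentration property only gives $\hat{supp}(\hat p_{x_i})=X$, which controls closed convex hulls, not overlap. Nothing prevents the kernels from being mutually singular: each can charge every neighbourhood of every extreme point of $X$ on pairwise disjoint sets. Then $\sum_i h_i\hat p_{x_i}=0$ forces every $h_i\hat p_{x_i}=0$, so no repair exists for that transport.
\item Picking points that lie in every $supp(\hat p_{x_i})$ fails for the same reason.
\item A ``common finite set'' charged by all kernels cannot exist when $\mu_0$ is non-atomic.
\end{itemize}

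\textbf{The missing ingredient.} You need the De March--Touzi polar-set characterization (the paper's Theorem~\ref{Th3}). This is also where Assumption~\ref{As1} actually enters. Since $rbd(X)$ is $\mu_0$-null, a set $\{x_i\}\times U$ with $\mu(x_i)>0$ and $\mu_0(U)>0$ is not polar. So some $p\in\mathcal M(\mu,\mu_0)$ sends positive mass from $x_i$ into $U$. The paper mixes countably many such transports over a countable base, using convexity of $\mathcal M(\mu,\mu_0)$. This gives Corollary~\ref{Co2}: a single $p$ with $supp(p_{x_i})=supp(\mu_0)$ for all $i$, after which the paper runs the Kleiner et al.\ graph argument.

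For your density construction, push this one step further. Apply non-polarity to Borel $U$ and use an exhaustion argument: maximize $\mu_0(\{dp_{x_i}/d\mu_0>0\})$ over countable mixtures. This yields a $p$ with every $p_{x_i}$ equivalent to $\mu_0$. Then $\rho:=\min_i\bigl(1\wedge \tfrac{dp_{x_i}}{d\mu_0}\bigr)\mu_0$ is the common minorant you need.

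\textbf{A separate error.} Drop the ``cleaner alternative'' of moving the atoms. Since $\tfrac12(\delta_{x_i+\varepsilon\gamma_i}+\delta_{x_i-\varepsilon\gamma_i})\neq\delta_{x_i}$, it does not exhibit $\mu$ as a midpoint. It also never uses affine dependence, which any valid argument must.
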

\textbf{Proof:} Suppose the support of $\mu$ is not affinely independent. Let $\{x_i\mid i \in I\}$ denote the support of $\mu$. According to Corollary \ref{Co2}, $\exists p \in \mathcal{M}(\mu, \mu_0)$, $\forall x_i$, $supp(p_{x_i})=supp(\mu_0)$. Let $\mu_i=\mu(x_i)p_{x_i}$, then $\mu_i(X) = \mu({x_i})$, $r(\mu_i)=:\frac{1}{\mu_i(X)}\int_X x d\mu_i(x)=\int_X x dp_{x_i}(x)=x_i$, and $\mu_0 =\sum_i \mu_i$. The remainder of the proof follows the argument of Proposition 1 in \citet{kleiner2024extreme}. \footnote{Notice the procedure of constructing graphs always stops with a graph $G_N$ where $\{x_i \mid i \in C\}$ is affinely dependent, because $supp(p_{x_i})=supp(\mu_0)$ for all $i\in I$, $\{x_i\mid i \in I\}$ are all included in $C$ in the end} $\hfill\blacksquare$ 
\par
Relative to proposition 1 in \citet{kleiner2024extreme}, whose convex partition is the finest one preserving convex dominance on each cell, Theorem \ref{Th1} implies that convex dominance also holds on each cell of our (potentially coarser) convex partition. Proposition \ref{pr2} therefore shows that within each cell, affine independence is still necessary for extremality of finitely supported measures. To extend this result to general (not necessarily finitely supported) extreme points, we require two approximation lemmas.
Lemma \ref{lm1} shows that the set of all mean-preserving contractions is the closure of the set of all finitely supported ones.
\begin{Lemma} \label{lm1}
 $ Cl(FMPC(\mu_0)) =MPC(\mu_0)$, where the closure is in weak convergence sense.  
\end{Lemma}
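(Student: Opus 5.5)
Two inclusions are needed. For $Cl(FMPC(\mu_0)) \subset MPC(\mu_0)$, I will show that $MPC(\mu_0)$ is weakly closed. Since $X$ is compact, every convex $\phi$ that is continuous on $X$ is bounded and continuous there. So if $\mu_k \to \mu$ weakly and $\int \phi\, d\mu_k \le \int \phi\, d\mu_0$ for all such $\phi$, the inequality passes to the limit.

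The convex order is defined through convex functions on $\mathbb{R}^n$, and these are continuous. Restricted to $X$ they are continuous, so this suffices. If one also wants convex functions defined only on $X$, which may jump at the boundary, the usual fix is to approximate them from below by finite maxima of affine functions, i.e. continuous convex functions. Monotone convergence then gives the inequality for all convex $\phi$. Hence $MPC(\mu_0)$ is closed and contains $FMPC(\mu_0)$.

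For $MPC(\mu_0) \subset Cl(FMPC(\mu_0))$, fix $\mu \in MPC(\mu_0)$. By \citet{strassen1965existence} there is $p \in \mathcal{M}(\mu,\mu_0)$, with kernel $p_x$ satisfying $\int \omega\, dp_x(\omega) = x$ for $\mu$-a.e. $x$. I discretize the signal rather than the posterior means. For each $k$, take a finite Borel partition $\{B^k_j\}_j$ of $X$ with cells of diameter at most $1/k$, refining as $k$ grows. Pool the states according to which cell their posterior mean falls into: on the coupling $(Z,Y)\sim p$, let $Z_k = \mathbb{E}[Y \mid Z \in B^k_j]$ on the event $\{Z\in B^k_j\}$, for cells with $\mu(B^k_j)>0$. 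Then
\[
z^k_j = \frac{1}{\mu(B^k_j)}\int_{B^k_j} x \, d\mu(x) \in conv(B^k_j)\cap X,
\]
and $\mu_k := \sum_j \mu(B^k_j)\,\delta_{z^k_j}$ is finitely supported. It lies in $MPC(\mu_0)$ because it is the law of a conditional expectation of $Y$ given a coarser $\sigma$-algebra: for convex $\phi$, conditional Jensen gives $\int \phi\, d\mu_k = \mathbb{E}[\phi(\mathbb{E}[Y\mid \mathcal{G}_k])] \le \mathbb{E}[\phi(Y)]$. Alternatively, $\mu_k \preccurlyeq \mu \preccurlyeq \mu_0$ by Jensen applied to $\mu$ on each cell, together with transitivity. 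So $\mu_k \in FMPC(\mu_0)$.

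It remains to show $\mu_k \to \mu$ weakly. Couple $\mu_k$ and $\mu$ by sending $x \in B^k_j$ to $z^k_j$. Since $z^k_j$ lies in $conv(B^k_j)$, which has the same diameter as $B^k_j$, we get $|x - z^k_j| \le 1/k$. The $1$-Wasserstein distance is therefore at most $1/k$, and on compact $X$ this is equivalent to weak convergence.

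I expect the only delicate point to be the closedness direction, and only insofar as one must settle which class of convex test functions is used. The construction direction is routine once the pooling is done by Jensen, which sidesteps the kernel entirely. The concentration structure of Theorem \ref{Th1} is not needed for this lemma. The cell-by-cell refinement, which keeps each $\mu_k$ inside the same concentration regions, is presumably the content of the second approximation lemma. If it were needed here, the same pooling could be applied within each $A_i$ separately.
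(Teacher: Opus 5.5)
Your proof is correct and follows essentially the same route as the paper. Closedness comes from passing the inequality for continuous convex test functions through the weak limit; you use compactness of $X$ where the paper truncates with $\min(\phi,Q)$ and $\max(\phi,P)$. Density comes from collapsing $\mu$ onto its barycenters over a finite partition of vanishing mesh, with Jensen giving $\mu_k \preccurlyeq \mu \preccurlyeq \mu_0$ and the small cell diameters giving weak convergence.

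One small caveat concerns your optional aside. Increasing maxima of affine functions lying below a convex $\phi$ on $X$ converge at best to its lower semicontinuous hull, which can be strictly below $\phi$ at boundary jump points, so that remark does not by itself handle discontinuous $\phi$. It is not needed, however, since convex order here is defined through convex functions on $\mathbb{R}^n$.
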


\par
In particular, Lemma \ref{lm2} shows any mean-preserving contraction can be approximated by finitely supported ones while preserving the concentration property.

\begin{Lemma}\label{lm2}
      For any $\mu \in MPC(\mu_0)$ that is concentrated on $X$, there exists a sequence $\{\mu_i\}_{i=1}^\infty$ such that $\mu_i \in FMPC(\mu_0)$, each $\mu_i$ is concentrated on $X$, and $\lim_{i \rightarrow \infty} \mu_i \rightarrow \mu$.
 \end{Lemma}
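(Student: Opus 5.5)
The natural route is to discretize a martingale transport $p \in \mathcal{M}(\mu,\mu_0)$ that witnesses the concentration property. Since $\mu$ is concentrated on $X$, there is $\hat{p} \in \mathcal{M}(\mu,\mu_0)$ with $\hat{supp}(\hat{p}_x)=X$ for all $x \in supp(\mu)$. Fix $\varepsilon_i \downarrow 0$ and partition $X$ (compact) into finitely many Borel cells $\{C^i_k\}_k$ of diameter at most $\varepsilon_i$. On each cell with $\mu(C^i_k)>0$, pool the posterior means by setting
$$\nu^i_k := \frac{1}{\mu(C^i_k)}\int_{C^i_k} \hat{p}_x\,\mu(dx), \qquad y^i_k := \int \omega\, \nu^i_k(d\omega) = \frac{1}{\mu(C^i_k)}\int_{C^i_k} x\,\mu(dx),$$
and define $\mu_i := \sum_k \mu(C^i_k)\,\delta_{y^i_k}$. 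By construction $p^i(dy,d\omega):=\sum_k \mu(C^i_k)\,\delta_{y^i_k}(dy)\otimes \nu^i_k(d\omega)$ has marginals $\mu_i$ and $\mu_0$ and satisfies the martingale property. Hence $\mu_i \in FMPC(\mu_0)$, which is the argument behind Lemma \ref{lm1} applied to $\hat{p}$. Each atom is moved by at most $\varepsilon_i$, because $y^i_k \in cl\,conv(C^i_k)$. So for any bounded Lipschitz test function the integrals against $\mu_i$ and $\mu$ differ by $O(\varepsilon_i)$, and therefore $\mu_i \to \mu$ weakly.

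It remains to check that each $\mu_i$ is concentrated on $X$. The second bullet of the definition should come from $p^i$ itself. For each atom, $\hat{supp}(p^i_{y^i_k})=cl\,conv\big(supp(\nu^i_k)\big)$, and $supp(\nu^i_k)\supseteq supp(\hat{p}_x)$ for $\mu$-a.e. $x\in C^i_k$, since a mixture's support contains that of any component carrying positive weight. More precisely, if $U$ is open with $\hat{p}_x(U)>0$ on a positive-$\mu$ subset of $C^i_k$, then $\nu^i_k(U)>0$. The closed convex hull of $supp(\hat{p}_x)$ equals $X$ for such $x$, which gives $\hat{supp}(p^i_{y^i_k})=X$. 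The first bullet, $\hat{supp}(p_y)\subset X$ for every $p\in\mathcal{M}(\mu_i,\mu_0)$, holds trivially because $\mu_0$ lives on $X$. If the definition is meant to refer to the relative interior, we also need each $y^i_k \in ri(X)$. This follows because $y^i_k$ is the barycenter of $\nu^i_k$, whose support has closed convex hull $X$, which is full dimensional, and the barycenter of such a measure lies in $int(X)$.

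The main obstacle is the measure-theoretic care in the support claim. We need the set of $x\in C^i_k$ with $\hat{supp}(\hat{p}_x)=X$ to have positive $\mu$-measure whenever $\mu(C^i_k)>0$; this is where the ``for all $x \in supp(\mu)$'' clause, or at least $\mu$-a.e., is used, possibly after modifying $\hat{p}$ on a null set via Proposition \ref{pr1}. We also need that $\nu^i_k$ charges every open set charged by the component kernels, which requires measurability of $x\mapsto \hat{p}_x(U)$ and Fubini. An alternative for this step is to take the mixture $\frac12 p^i + \frac12 q$, where $q$ comes from a discretized version of any kernel with full support. I would instead try the direct argument first: use a countable base $\{U_m\}$ of $X$, and note that $\hat{supp}(\hat{p}_x)=X$ forces $\hat{p}_x$ to charge neighborhoods of all extreme points of $X$, which is enough to make the closed convex hull of $supp(\nu^i_k)$ equal to $X$.
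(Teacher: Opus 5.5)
\textbf{Gap: atoms of $\mu_i$ that come from boundary mass of $\mu$.}
Your pooling of $\hat p$ works as far as it goes. $p^i$ is a martingale coupling of $\mu_i$ and $\mu_0$, and $\mu_i\to\mu$ weakly. Every atom coming from a cell with $\mu(C^i_k\cap ri(X))>0$ gets a kernel with $\hat{supp}(\nu^i_k)=X$: Milman's converse of Krein--Milman gives $ext(X)\subseteq supp(\hat p_x)$ for such $x$, so the mixture charges every neighbourhood of every extreme point.

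The gap is the step you yourself single out, which you settle by reading the hypothesis as ``full hull at every $x\in supp(\mu)$''. The paper's definition only constrains $\mu_{\mid ri(X)}$. So $\mu$ may charge $\partial X$, and there every kernel is confined to a proper face (integrate a supporting affine function). A cell whose $\mu$-mass lies in $\partial X$ can still have its barycenter in $int(X)$.

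For a concrete case, take $X=[0,1]^2$ and let $\tau$ be uniform on the two edges through the origin. Set $\mu_0=\frac12\,\mathrm{Unif}(X)+\frac12\tau$ and $\mu=\frac12\delta_{(1/2,1/2)}+\frac12\tau$. Then $\mu_{\mid ri(X)}$ is concentrated on $X$. But a grid cell $[0,\varepsilon)^2$ produces the atom $(\varepsilon/4,\varepsilon/4)\in ri(X)$, whose pooled kernel sits on the two edges for every choice of $\hat p$. Its closed convex hull lies in a triangle, so $p^i$ does not witness the second bullet. If $X$ is a disc and $\mu$ carries uniform mass on the boundary circle, every boundary-only cell produces such an atom, whatever cells you choose.

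Under Assumption \ref{As1} this cannot happen. There $\partial X=rbd(ri(X))\subseteq\mathcal V$ is $\mu_0$-null, hence $\mu$-null (kernels at boundary points live in $\partial X$), and your argument is complete. However, Lemma \ref{lm2} carries no such assumption, and it is invoked in Lemma \ref{lm8} precisely for the boundary-mass case.

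\textbf{The missing step and how the paper handles it.}
What is missing is an exchange step. Suppose some atom $z$ of $\mu_i$ (mass $m_z$) has a kernel $\nu_z$ with $\hat{supp}(\nu_z)=X$; this holds whenever $\mu(ri(X))>0$. Let $y\in int(X)$ be any other atom, with mass $m_y$ and kernel $\nu_y$.
\begin{itemize}
\item Pick a probability $\rho$ with barycenter $y$ and $c\,\nu_z\le\rho\le C\nu_z$. This is possible since $y\in int(X)=int\,conv(supp\,\nu_z)$.
\item Replace $\nu_y$ by $(1-\eta)\nu_y+\eta\rho$ and $\nu_z$ by $\nu_z+\frac{\eta m_y}{m_z}(\nu_y-\rho)$.
\end{itemize}
For small $\eta>0$ both are probability measures with the right barycenters and closed convex hull $X$, and the $\Omega$-marginal is unchanged. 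Finitely many such exchanges repair every interior atom.

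This is what the paper's device supplies. It takes arbitrary approximants $\nu_i$ from Lemma \ref{lm1} and sets $\mu_i=(1-\epsilon_i)\nu_i+\epsilon_i\delta_{\bar x}$, with $\bar x$ the barycenter of $\mu_0$. This plants an atom whose kernel can be $\mu_0$ itself, a universal donor. That route never needs the witness $\hat p$ and does not care about boundary mass. The paper, however, only exhibits the full-support kernel at $\bar x$ and leaves this exchange implicit.

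Your construction gets good kernels for free at atoms coming from $ri(X)$. It still needs the exchange (or the same $\delta_{\bar x}$ mixture) for atoms coming from $\partial X$.
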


\par Thanks to Lemma \ref{lm2}, we can extend Proposition \ref{pr2} to any extreme points of $MPC(\mu_0)$ given there exists only one concentration region.

\begin{Proposition}\label{pr3}
      Suppose $\mu \in MPC(\mu_0)$ is concentrated on $X$, and Assumption \ref{As1} holds, then $\mu$ is a extreme point of $MPC(\mu_0)$ if and only if $\mu$ has affine independent support.
\end{Proposition}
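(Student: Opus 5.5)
The plan has two directions. The ``only if'' direction is an approximation argument built on Lemma \ref{lm2} and Proposition \ref{pr2}. The ``if'' direction is a direct rigidity argument using the concentration property.

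\textbf{Sufficiency.} Suppose $supp(\mu)=\{x_1,\dots,x_k\}$ is affinely independent, so that $k\le n+1$, and suppose $\mu=\tfrac12(\nu_1+\nu_2)$ with $\nu_1,\nu_2\in MPC(\mu_0)$.

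First, both $\nu_j$ are absolutely continuous with respect to $\mu$, so $supp(\nu_j)\subset\{x_1,\dots,x_k\}$.

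Second, because $\mu$ is concentrated on $X$, the kernel of any $p\in\mathcal{M}(\nu_j,\mu_0)$ should be supported in $X=cl(A)$. I would argue this directly: $\tfrac12(p^1+p^2)\in\mathcal{M}(\mu,\mu_0)$ whenever $p^j\in\mathcal{M}(\nu_j,\mu_0)$. Hence the kernels $p^j_{x_i}$ are dominated by the kernels of a transport for $\mu$.

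Third, write $\mu_0=\sum_i\nu_j(x_i)p^j_{x_i}$. The barycenter of each piece $\nu_j(x_i)p^j_{x_i}$ is $x_i$.

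The key step is then to show $\nu_j(x_i)=\mu(x_i)$. For this I would test the convex-order inequality $\int\phi\,d\nu_j\le\int\phi\,d\mu_0$ against affine functions and against the piecewise-affine convex functions $\phi_i$, where $\phi_i$ equals $1$ at $x_i$, equals $0$ at the other $x_l$, and is extended affinely using affine independence. Combining these with $\nu_1+\nu_2=2\mu$ should pin down the weights.

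If this route stalls, I would use the alternative: the weights of $\nu_j$ solve the linear system $\sum_i w_i x_i=\int\omega\,d\mu_0$, $\sum_i w_i=1$. Affine independence makes this system have a unique solution, so $\nu_j=\mu$. This second argument is cleaner, so I would lead with it.

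\textbf{Necessity.} Suppose $\mu$ is extreme but $supp(\mu)$ is not affinely independent. Then I pick $n+2$ points of $supp(\mu)$ (possibly infinitely many exist) and small disjoint neighbourhoods $U_1,\dots,U_{n+2}$ around them, each with positive $\mu$-mass. I choose the neighbourhoods so that any selection of points, one from each $U_m$, remains affinely dependent; this requires at least $n+2$ points, which is automatic in $\mathbb{R}^n$.

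Next I invoke Lemma \ref{lm2} to obtain $\mu_i\in FMPC(\mu_0)$ concentrated on $X$ with $\mu_i\to\mu$. For large $i$, each $\mu_i$ has mass bounded below in every $U_m$.

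By Proposition \ref{pr2}, each $\mu_i$ fails to be extreme in $FMPC(\mu_0)$. Following the Kleiner et al.\ construction, I would obtain an explicit perturbation $\mu_i\pm\varepsilon\eta_i$, where $\eta_i$ is a signed measure that moves mass among the atoms lying in the $U_m$'s. The step size $\varepsilon$ should be bounded below uniformly in $i$, because the masses in the $U_m$ and the affine-dependence coefficients are controlled.

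Passing to a weak limit $\eta$ of the $\eta_i$ gives $\mu\pm\varepsilon\eta\in MPC(\mu_0)$, since $MPC(\mu_0)$ is weakly closed by Lemma \ref{lm1}. Here $\eta\neq0$, which contradicts extremality.

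\textbf{Main obstacle.} The hardest step is this uniformity: ensuring the perturbation does not degenerate in the limit. Specifically, the coefficients $\beta$ of the affine dependence must stay bounded away from $0$, and the perturbed measures must remain in $MPC$. I would handle this by fixing the $U_m$ small enough that the dependence coefficients vary continuously. I would also use the full-support transport from Corollary \ref{Co2}, which guarantees that each $p_{x}$ charges all of $supp(\mu_0)$. Together these give a uniform slack for redistributing mass.
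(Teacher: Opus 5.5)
Your sufficiency argument is correct and is the same as the paper's. Since $\nu_j\ll\mu$, each $\nu_j$ sits on the same affinely independent atoms, and the barycenter system then has a unique solution.

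The gap is in necessity, at exactly the step you flag as the main obstacle. The tools you invoke do not close it.

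\textbf{Why the uniform step size is not established.} How far mass can be shifted among atoms of $\mu_i$ while staying in $MPC(\mu_0)$ depends on how much $\mu_0$-mass the kernels at atoms in different $U_m$ can exchange without moving their barycenters. It does not depend only on the atom masses and the coefficients $\beta$.
\begin{itemize}
\item Corollary \ref{Co2} is purely qualitative. For each fixed $i$ it gives $supp(p_x)=supp(\mu_0)$, but no lower bound on $p_x(B)$ that survives $i\to\infty$.
\item Lemma \ref{lm2} secures concentration of $\mu_i$ only through the atom $\epsilon_i\delta_x$ at the barycenter of $\mu_0$, and that weight vanishes.
\item Proposition \ref{pr2} is a black box. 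It gives some nontrivial splitting of $\mu_i$, with no control over where the perturbation sits or how large it is.
\end{itemize}
So the $\eta_i$ may converge weakly to $0$, and no contradiction follows.

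A smaller point: affine dependence does not require $n+2$ support points (take three collinear atoms in $\mathbb{R}^2$). For finitely supported $\mu$, note that an extreme point of $MPC(\mu_0)$ lying in $FMPC(\mu_0)$ is extreme there, and apply Proposition \ref{pr2}. Alternatively, take the $U_m$ to be the atoms.

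\textbf{The missing idea: quantitative slack at $\mu$ itself.} Any uniform bound would have to come from a transport for $\mu$. Once you have one, you can perturb $\mu$ directly and drop the approximation.
\begin{itemize}
\item Take the $U_m$ to be disjoint compact subsets of $\operatorname{int}X$. This is possible because $\mu(\operatorname{int}X)=1$: the barycenter $x$ of $\hat p_x$ lies in $ri(\hat{supp}(\hat p_x))=\operatorname{int}X$.
\item Let $\xi=\sum_m\beta_m\,\mu_{U_m}/\mu(U_m)$, where $\beta\neq0$ is an affine dependence of the barycenters of the $\mu_{U_m}$. Then $\xi\neq0$, $\xi$ annihilates affine functions, and $\mu\pm t\xi\ge0$ for small $t$.
\item Since $\hat{supp}(\hat p_x)=X$, every extreme point of $X$ lies in $supp(\hat p_x)$.
\item Choose a finite set $F$ of extreme points with $K:=\bigcup_m U_m\subset\operatorname{int}conv(F)$.
\item Choose $r>0$ so that the balls $B(e,r)$ are pairwise disjoint and $K\subset conv\{b_e\}$ for every choice $b_e\in B(e,r)\cap X$.
\item Choose $c>0$ so that $V:=\{x:\hat p_x(B(e,r))\ge c\ \forall e\in F\}$ has $\mu(V)>0$.
\end{itemize}
Now fix a convex $\phi$ and $x\in V$, and let $\ell_x$ be a supporting affine function of $\phi$ at $x$. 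Write each $w\in K$ as a convex combination of near-minimizers of $\phi-\ell_x\ge0$ on the balls $B(e,r)$. This gives $\sup_K(\phi-\ell_x)\le c^{-1}\bigl(\int\phi\,d\hat p_x-\phi(x)\bigr)$.

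Because $\xi$ is carried by $K$ and annihilates affine functions, $\lvert\int\phi\,d\xi\rvert\le\lvert\xi\rvert(X)\sup_K(\phi-\ell_x)$. Averaging over $V$, and using Jensen off $V$, yields
\[
\Bigl\lvert\int\phi\,d\xi\Bigr\rvert\le\frac{\lvert\xi\rvert(X)}{c\,\mu(V)}\Bigl(\int\phi\,d\mu_0-\int\phi\,d\mu\Bigr).
\]
Hence $\mu\pm t\xi\preccurlyeq\mu_0$ for small $\lvert t\rvert$, which contradicts extremality.

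The paper instead passes to weak limits of mixtures of finitely supported extreme points via Lemma \ref{lm2}. The direct perturbation above avoids any limiting step.
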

\textbf{Proof:} Sufficient condition: since $\mathbb{E}[x]_{x\sim{\mu}}$ is constant,and $\mu$ has affine independent support, it is impossible to have different weight for each point on the support. This guarantees 
 $\mu$ cannot be represented by the convex combination of two distinct probability measures $\mu_1$ and $\mu_2$ which have same mean and support with $\mu$. 
\par
Necessary condition: We know for any $\mu \in FMPC(\mu_0) \subset MPC(\mu_0)$, it is a mixture of extreme points of $MPC(\mu_0)$, moreover, we know it is a mixture of finitely supported extreme points of $MPC(\mu_0)$ since $\mu$ is finitely supported. All finitely supported extreme points of $MPC(\mu_0)$ must be extreme points of $FMPC(\mu_0)$, so for any $\mu \in FMPC(\mu_0)$, it is a mixture of extreme points of $FMPC(\mu_0)$. If we further require $\mu$ is concentrated on $X$, then $\mu$ must be a mixture of extreme points of $FMPC(\mu_0)$ with each extreme point being concentrated on $X$. Proposition \ref{pr2} gives the characterization of extreme points,
taking weak limits, and by Lemma \ref{lm2}, this result holds for for any $\mu \in MPC(\mu_0)$ that is concentrated on $X$.
 $\hfill\blacksquare$ 

Finally, combining proposition \ref{pr2} with Theorem \ref{Th1}, we can give our full characterization of extreme points of $MPC(\mu_0)$. The proof of Theorem \ref{Th2} further shows the nuance of localization structure of any mean-preserving contractions. For the sufficiency proof, we essentially prove that for any two mean-preserving contractions, their convex combination must have a coarser convex partition than theirs. For the necessity proof, the global extremality forces local extremality, which allows us to focus on each simpler local mean-preserving contraction.
\begin{Theorem}\label{Th2}
      Suppose $\mu \in MPC(\mu_0)$ and Assumption \ref{As1} holds, then $\mu$ is a extreme point of $MPC(\mu_0)$ if and only if
      there exists a collection of disjoint convex and relatively open sets $\{A_i\}_{i\in I}$ such that $\mu$-a.s.
 \begin{itemize}
   \item $\mu_{\mid A_i}\preccurlyeq \mu_{0\mid A_i}$, $\mu_{\mid A_i}$ is concentrated on $cl(A_i)$, and $\mu_{\mid A_i}$ has affine independent support for all $i \in I$.
 \item $\mu_{ \{\Omega \backslash A\}}= \mu_{0 \{\Omega \backslash A\} }$, where $A=\bigcup_i^IA_i$
 \end{itemize}
 \end{Theorem}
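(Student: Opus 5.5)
The plan is to take the structural decomposition of Theorem \ref{Th1} as the starting point. It supplies, for any $\mu \in MPC(\mu_0)$, the canonical family of disjoint relatively open convex cells $\{A_i\}_{i\in I}$ (the non-singleton irreducible components from Proposition \ref{pr1}). With this family we have $\mu_{\mid A_i}\preccurlyeq\mu_{0\mid A_i}$, each $\mu_{\mid A_i}$ is concentrated on $cl(A_i)$, and $\mu$ coincides with $\mu_0$ off $A=\bigcup_i A_i$. It therefore remains to show that extremality of $\mu$ in $MPC(\mu_0)$ is equivalent to affine independence of $supp(\mu_{\mid A_i})$ for every $i$. Both directions will reduce to the single-region result, Proposition \ref{pr3}, applied cell by cell: replace $X$ by $cl(A_i)$ and the prior by $\mu_{0\mid A_i}$. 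Assumption \ref{As1} guarantees that $rbd(A_i)$ is $\mu_0$-null, so the restricted problem is again one with a single concentration region in the affine hull of $A_i$, where $cl(A_i)$ is full-dimensional.

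For necessity, I will argue that global extremality forces local extremality. Suppose some cell $A_{i_0}$ with $\mu_0(A_{i_0})>0$ has $\mu_{\mid A_{i_0}}$ with affinely dependent support. By Proposition \ref{pr3}, $\mu_{\mid A_{i_0}}$ is then not extreme in $MPC(\mu_{0\mid A_{i_0}})$, so we can write $\mu_{\mid A_{i_0}}=\tfrac12(\nu_1+\nu_2)$ with $\nu_1\neq\nu_2$ and $\nu_j\preccurlyeq\mu_{0\mid A_{i_0}}$. Set
\[
\mu^j=\mu-\mu(A_{i_0})\mu_{\mid A_{i_0}}+\mu(A_{i_0})\nu_j .
\]
To see that $\mu^j\in MPC(\mu_0)$, glue martingale transports: use any $p\in\mathcal M(\mu,\mu_0)$ on the other cells and on the complement of $A$, and a transport in $\mathcal M(\nu_j,\mu_{0\mid A_{i_0}})$ on the cell $A_{i_0}$. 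The concentration property and the null boundaries are what make the marginals add up correctly. Since $\mu=\tfrac12(\mu^1+\mu^2)$ with $\mu^1\neq\mu^2$, $\mu$ is not extreme.

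For sufficiency, suppose $\mu=\tfrac12(\mu^1+\mu^2)$ with $\mu^j\in MPC(\mu_0)$. The key claim is that the irreducible partition of the mixture is coarser than those of $\mu^1$ and $\mu^2$. Concretely, take $p^j\in\mathcal M(\mu^j,\mu_0)$ as given by Proposition \ref{pr1}. Then $\tfrac12(p^1+p^2)\in\mathcal M(\mu,\mu_0)$, and its kernel at $x$ is a mixture of $p^1_x$ and $p^2_x$ weighted by the densities $d\mu^j/d\mu$. Proposition \ref{pr1} applied to $\mu$ then gives $\hat{supp}(p^j_x)\subset cl(A_x)$ for $\mu^j$-a.e. $x$. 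This yields three consequences:
\begin{enumerate}
\item $\mu^j$ agrees with $\mu_0$ off $A$, hence with $\mu$ there;
\item $\mu^j$ assigns each cell the same mass as $\mu_0$;
\item $\mu^j_{\mid A_i}$ is a mean-preserving contraction of $\mu_{0\mid A_i}$ whose support lies in $supp(\mu_{\mid A_i})$, since $\mu^j\ll\mu$.
\end{enumerate}
Both $\mu^j_{\mid A_i}$ and $\mu_{\mid A_i}$ then have the same barycenter as $\mu_{0\mid A_i}$ and live on the same affinely independent finite set, so their barycentric weights are unique. Hence $\mu^j_{\mid A_i}=\mu_{\mid A_i}$ for all $i$, and therefore $\mu^1=\mu^2=\mu$.

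I expect the main obstacle to be the measure-theoretic bookkeeping. First, the absolute continuity $\mu^j\ll\mu$ must transfer to the statement that $p^j_x$ is supported in $cl(A_x)$. Second, there may be uncountably many cells, so every cellwise statement has to hold $\mu$-a.s. and be handled through the disintegration $\mu=\int\mu_{\mid A_x}\,d\lambda$, not through a countable sum. I would handle the full-dimensional cells, of which there are countably many, explicitly. Lower-dimensional cells would be treated through the disintegration, with measurability supplied by the analytic measurability in Proposition \ref{pr1}.
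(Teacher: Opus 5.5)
Your proposal follows essentially the same route as the paper. Sufficiency goes through the coarsening argument of Lemma \ref{lm9}: you mix the two canonical transports, invoke the maximality in Proposition \ref{pr1}, and then use uniqueness of barycentric weights on an affinely independent support. Necessity reduces global extremality to cellwise extremality and applies Proposition \ref{pr3} on each non-singleton cell.

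Your version is more explicit than the paper's: it gives the gluing construction, and it spells out the per-cell mass and barycenter matching that the paper's phrase ``same supports'' leaves implicit. One point remains incomplete in both arguments. When a continuum of $\mu$-null lower-dimensional cells carries positive total mass, the necessity direction needs a measurable choice of the cellwise splittings, and neither you nor the paper constructs one.
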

 \textbf{Proof:} Sufficient condition:
 If $\mu$ is not a extreme point, then we can find two probability measure $\mu_1,\mu_2 \in MPC(\mu_0)$, such that $\mu=\frac{1}{2}\mu_1+\frac{1}{2}\mu_2$. Define corresponding $\{A^1_j\}_{j \in J}$ and $\{A^2_k\}_{k \in K}$ as in the Theorem \ref{Th1}. We say $\mu_1$ has finer concentration regions than $\mu$ with respect to $\mu_0$ if $A^1_x \subset A_x$ for $\mu_2$-a.s $x$. According to Lemma \ref{lm9}, both $\mu_1$ and $\mu_2$ have finer concentration regions than $\mu$ with respect to $\mu_0$. Since $\mu_1$ and $\mu_2$ must have the same supports as $\mu$ in each $A_i$ and singleton sets, and $\mu$ has affine independent supports in each $A_i$ and singleton sets, then $\mu_{\mid A_i}=\mu_{1\mid A_i}=\mu_{2\mid A_i}$, and $\mu_{ \{\Omega \backslash A\}}= \mu_{1 \{\Omega \backslash A\}}=\mu_{2 \{\Omega \backslash A\} }$, which implies $\mu=\mu_1=\mu_2$. 
 \par
Necessary condition: According to Proposition \ref{pr1}, the partition we have comes from a measurable mapping, so we have the following disintegration form, $$\mu = \int_{\{A_x\}_{x \in X}}\mu_{\mid A_x}d\lambda(A_x),$$
where each $\mu_{\mid A_x}$ is a conditional probability measure supported on the corresponding convex component,  $\lambda(A')=\mu(\{x:A_x \subset A'\})$ for all measurable $A' \in \{A_x\}_{x\in X}$.
The extremality of $\mu$ requires the extremality of $\mu_{\mid A_x}$ for 
$\lambda$-almost every $A_x$. Then applying Proposition \ref{pr3} to all $A_x$ that are not singleton, which is $\{A_i\}_{i \in I}$ here, we finish the proof. $\hfill\blacksquare$ 

\section{If Assumption \ref{As1} Fails}\label{sec4}
Assumption \ref{As1} ensures that the relative boundaries of the non-singleton cells in the convex partition are negligible under the prior. This regularity is what allows the global convex-order relation to decompose cleanly into independent local dominance relations on each cell. When Assumption \ref{As1} fails, this clean separation may break down: boundary points may simultaneously belong to the closures of multiple cells, and martingale transports may exploit these boundary interactions to mix mass across otherwise distinct concentration regions. As a result, the geometric characterization in Theorem \ref{Th1} no longer holds in full generality, and the structure of extreme points becomes more subtle.
\par
Even without Assumption \ref{As1}, Proposition \ref{pr1} still provides a canonical convex partition $\{A_x\}_{x \in X}$ of the state space, and Corollary \ref{Co1} continues to identify the fully revealing singleton cells. What fails is the guarantee that the closures of distinct non-singleton cells are disjoint up to $\mu_0$-null sets. Consequently, a mean-preserving contraction may assign mass to boundary points that lie in the closures of multiple cells, creating additional feasible martingale transports that do not respect the local concentration structure.
\par
The following theorem establishes what can still be said about extreme points of the finite-support mean-preserving contraction set when Assumption \ref{As1} is dropped.

\begin{Theorem}\label{Th4}
$\mu \in MPC(\mu_0)$ is a extreme point of $MPC(\mu_0)$ only if
there exists a collection of disjoint convex and relatively open sets $\{A_x\}_{x\in X}$ such that for $\mu$-a.s. x, 
\begin{itemize}
\item $\mu_{\mid A_x}$ is concentrated on $cl(A_x)$, and $\mu_{\mid A_x}$ has affine independent support for all $x \in X$,
\item $\mu_{ \{\Omega \backslash cl(A)\} }= \mu_{0  \{\Omega \backslash cl(A)\}}$, where $A:= \mathbb{R}^n \backslash \{x \in \mathbb{R}^n: A_x=\{x\}\}$.
\end{itemize}
\end{Theorem}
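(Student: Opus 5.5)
The plan is to take for $\{A_x\}_{x\in X}$ the canonical convex partition of Proposition \ref{pr1}, built from the maximal kernel $\hat{p}\in\mathcal{M}(\mu,\mu_0)$, and to check the two bullets of Theorem \ref{Th4} one at a time. Assumption \ref{As1} is not used anywhere, which is why only necessity is claimed.

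\textbf{Second bullet and the first half of the first bullet.} The second bullet is exactly Corollary \ref{Co1}. Its proof never used Assumption \ref{As1}: it only needs $\hat{p}_x(cl(A_x))=1$ and $\hat{p}_x=\delta_x$ on singleton cells. For the concentration part of the first bullet, Proposition \ref{pr1} gives $\hat{supp}(p_x)\subset cl(A_x)$ for every $p\in\mathcal{M}(\mu,\mu_0)$ and $\mu$-a.e.\ $x$, while $\hat{supp}(\hat{p}_x)=cl(A_x)$. This is the definition of $\mu_{\mid A_x}$ being concentrated on $cl(A_x)$, after discarding a $\mu$-null set of $x$. So these parts are essentially bookkeeping.

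\textbf{Affine independence: strategy.} The substantive step is affine independence of $supp(\mu_{\mid A_x})$ for $\mu$-a.e.\ $x$. I would argue by contradiction, through an explicit perturbation of $\mu$ that keeps $\mu_0$ fixed rather than through local extremality and Proposition \ref{pr3}, since Proposition \ref{pr3} relied on Assumption \ref{As1}. Suppose that on a $\lambda$-positive set of cells the conditional measure $\mu_{\mid A_x}$ has affinely dependent support.
\begin{itemize}
\item First reduce to finitely many points. Pick a measurable selection of $n+2$ points $x_1,\dots,x_{n+2}$, or small balls around them, in $supp(\mu_{\mid A_x})$, together with coefficients $\beta_j$ satisfying $\sum_j\beta_j=0$ and $\sum_j\beta_j x_j=0$, not all zero.
\item Then split the prior mass that $\hat{p}$ sends to these points. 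Let $\nu_j=\mu(dx_j)\,\hat{p}_{x_j}$ near $x_j$, following the proof of Proposition \ref{pr2} and \citet{kleiner2024extreme}. Since each $\hat{p}_{x_j}$ has full convex support $cl(A_x)$, the supports overlap in the relative interior of the cell.
\item On the overlaps, shift a small amount of prior mass between the $\nu_j$ so that the barycenters stay fixed while the weights change by $\pm\varepsilon\beta_j$. This gives two new posterior-mean measures $\mu^{\pm}$ with $\mu=\tfrac12(\mu^++\mu^-)$. Each $\mu^{\pm}$ still admits a martingale coupling with the same $\mu_0$-marginal, because we only reshuffle mass already transported by $\hat{p}$ within $cl(A_x)$. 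Hence $\mu^{\pm}\in MPC(\mu_0)$ and $\mu^{\pm}\neq\mu$, contradicting extremality.
\end{itemize}

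\textbf{Main obstacle.} The hard step is carrying out this perturbation measurably over uncountably many cells when the $\mu_{\mid A_x}$ are not finitely supported. If the direct construction becomes unwieldy, my fallback is to approximate each $\mu_{\mid A_x}$ by finitely supported measures concentrated on $cl(A_x)$, in the spirit of Lemma \ref{lm2}, apply the finite-support argument of Proposition \ref{pr2}, and pass to weak limits.

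The boundary mass of $\mu_0$ that Assumption \ref{As1} would have excluded causes no trouble here. The perturbation only moves mass lying in $ri(A_x)$ and never touches the kernel at relative boundary points. The couplings shared across cells, as at $a_3$ in Example \ref{example1}, are therefore left unchanged, so the argument yields necessity without the assumption.
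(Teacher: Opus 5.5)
Your treatment of the second bullet and of concentration matches the paper (Proposition \ref{pr1} and Corollary \ref{Co1}). For affine independence, though, the paper does not perturb globally. It disintegrates $\mu$ over the cells, uses that extremality of $\mu$ forces extremality of $\mu_{\mid A_x}$ for $\lambda$-a.e.\ cell, and applies Lemma \ref{lm8}, which combines the finite-support Lemma \ref{lm7} with the approximation in Lemma \ref{lm2}.

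\textbf{The gap in your route.} The step ``since each $\hat{p}_{x_j}$ has full convex support $cl(A_x)$, the supports overlap in the relative interior of the cell'' does not follow. By Milman's converse to Krein--Milman, $\hat{supp}(\hat{p}_{x_j})=cl(A_x)$ only forces $supp(\hat{p}_{x_j})$ to contain the extreme points of $cl(A_x)$, and these lie in $rbd(A_x)$. Inside $ri(A_x)$, two kernels of a maximal coupling may charge disjoint regions, because Proposition \ref{pr1} pins down only the convex hulls, not the kernels. The paper's substitute is Lemma \ref{lm6}. Using the polar-set characterization of de March and Touzi (their Theorem 2.5) and a countable mixture of couplings, it builds a \emph{different} $p\in\mathcal{M}(\mu,\mu_0)$ whose kernel at every atom charges every relatively open piece of $supp(\mu_0)\cap ri(conv(supp(\mu_0)))$. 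Only then does it run the graph argument of Kleiner et al. Your proposal needs this ingredient or an equivalent; the canonical $\hat{p}$ will not do.

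\textbf{Two further problems.} First, your reason why Assumption \ref{As1} is unnecessary (``the perturbation only moves mass lying in $ri(A_x)$'') is the wrong one, and it can leave nothing to move. Let $\mu_0$ be uniform on the corners of $[0,1]^2$ and $\mu$ uniform on $(0.4,0.5),(0.6,0.5),(0.5,0.4),(0.5,0.6)$.
\begin{itemize}
\item The cell is the open square, and the corners are singleton cells in $\mathcal{V}$, so Assumption \ref{As1} fails.
\item $\mu$ is not extreme, since the weight changes $\pm\varepsilon(1,1,-1,-1)$ remain feasible for small $\varepsilon$.
\item Yet $\mu_0$ has no mass in $ri(A_x)$, so any perturbation must reshuffle corner mass among the $\nu_j$.
\end{itemize}
What protects the other cells is not where the moved mass sits, but that you only rearrange the cell's own share $\sum_j\nu_j$. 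Any richness result you rely on must cover that whole share, boundary mass included.

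Second, your fallback invokes Proposition \ref{pr2}. Its proof runs through Corollary \ref{Co2} and Theorem \ref{Th3}, and hence through Assumption \ref{As1}. The paper's assumption-free finite-support statement is Lemma \ref{lm7}, which claims affine independence only on $ri(X)$. Moreover, ``pass to weak limits'' needs its own argument: extremality is not preserved under weak convergence. Knowing that finitely supported approximants are mixtures of affinely independent extreme points says nothing about an extreme $\mu$ itself. The paper's Lemma \ref{lm8} ends with the same step, so you cannot simply borrow it.
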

\textbf{Proof:} We only needs to prove $\mu_{\mid A_x}$ has affine independent support for all $x \in X$, the other conditions come directly from Proposition \ref{pr1} and Corollary \ref{Co1}.
According to Proposition \ref{pr1}, the partition we have come from a neasurable mapping, so we have the following disintergration form, $$\mu = \int_{\{A_x\}_{x \in X}}\mu_{\mid A_x}d\lambda(A_x),$$
where each $\mu_{\mid A_x}$ is a conditional probability measure supported on the corresponding convex component,  $\lambda(A')=\mu(\{x:A_x \subset A'\})$ for all measurable $A' \in \{A_x\}_{x\in X}$.
The extremality of $\mu$ requires the extremality of $\mu_{\mid A_x}$ for 
$\lambda$-almost every $A_x$. Then applying Lemma \ref{lm8} to all $A_x$, we finish the proof. $\hfill\blacksquare$ 

\par
Theorem \ref{Th4} therefore provides necessary conditions for extremality even when Assumption \ref{As1} fails. However, these conditions are no longer sufficient. The failure of boundary negligibility allows martingale transports to mix mass across the closures of distinct cells, and this additional flexibility can destroy extremality even when each local component satisfies affine independence. The following example illustrates this phenomenon.
\begin{Example}\label{example2}
Let  $b_1=(\frac{1}{2},\frac{1}{2})$, $b_2=(\frac{1}{2},0)$,$b_3=(0,\frac{1}{2})$ $b_4=(\frac{1}{2},1)$, $b_5=(1,\frac{1}{2})$ let $\mu_0((0,1)\times(0,1))=0.2$ and are uniformly distributed on $(0,1)\times(0,1)$, $\mu_0(b_2)=0.2$, $\mu_0(b_3)=0.2$, $\mu_0(b_4)=0.2$, $\mu_0(b_5)=0.2$, and $\mu(b_1)=0.6$, $\mu(b_2)=0.1$, $\mu(b_3)=0.1$,$\mu(b_4)=0.1$, $\mu(b_5)=0.1$.
\begin{figure}[htbp]\caption{Example 2}
\centering
\includegraphics[width=0.5\textwidth]{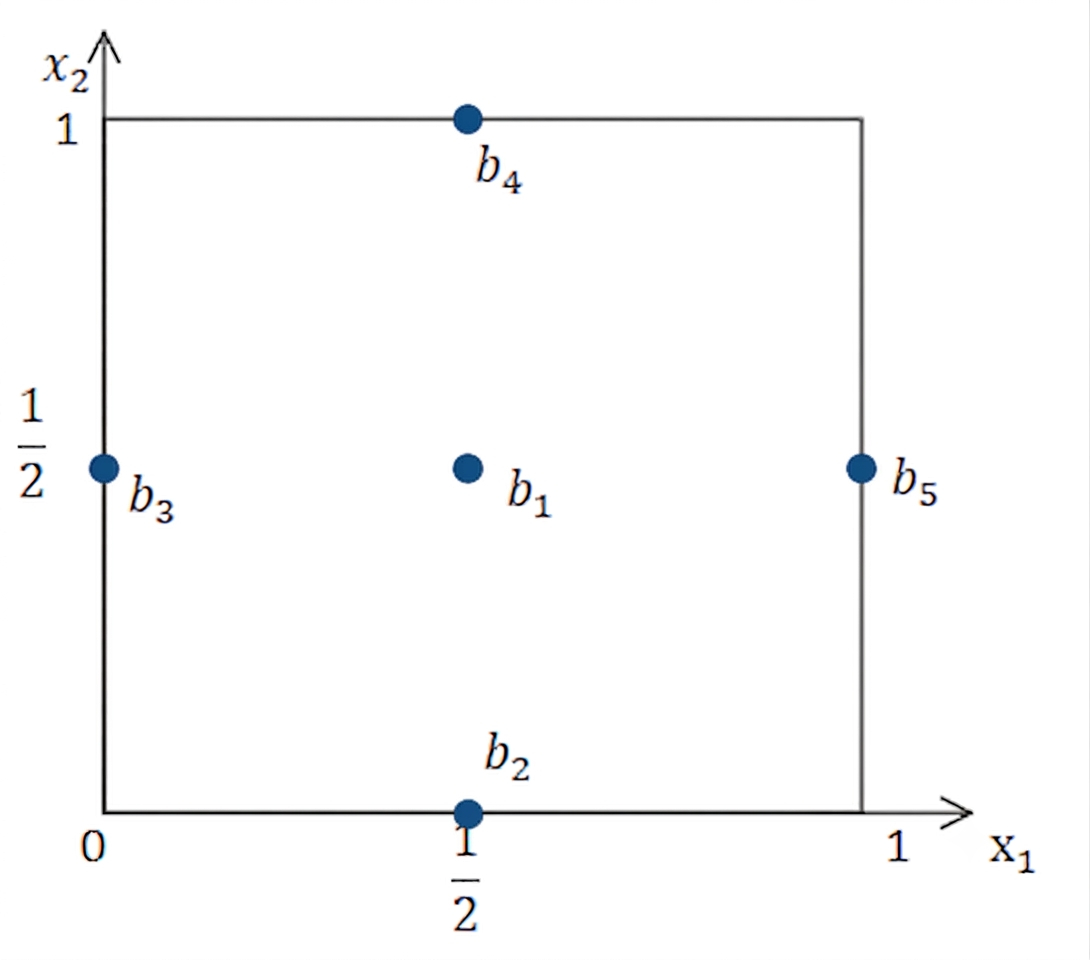}
\end{figure}
\par
$\mu$ and $\mu_0$ are in the convex order. The support of $\mu$ includes 5 points, $b_1$,$ b_2$, $b_3$, $b_4$, $b_5$. Each of them belongs to a different cell of the convex partition: $A_{b_1}=(0,1) \times (0,1)$, $A_{b_2}=\{b_2\}$,$A_{b_3}=\{b_3\}$,$A_{b_4}=\{b_4\}$ and $A_{b_5}=\{b_5\}$. Support of $\mu_{\mid A_{b_i}}$ is affine independent for all i=1,2,3,4,5, but $\mu=\frac{1}{2}\delta_{(\frac{1}{2},\frac{1}{2})}+\frac{1}{2}\mu_1$, where $\delta_{(\frac{1}{2},\frac{1}{2})}$ is a Dirac measure concentrated on $(\frac{1}{2},\frac{1}{2})$ and $\mu_1$ assigns 0.2 to all $b_i,i=1,2,3,4,5$. It is easy to verify both $\delta_{(\frac{1}{2},\frac{1}{2})}$ and $\mu_1$ are mean-preserving contractions of $\mu_0$.
\end{Example}

\par
This example shows that without Assumption \ref{As1}, affine independence within each cell is no longer sufficient for extremality. Boundary interactions introduce additional degrees of freedom that cannot be ruled out by local geometric conditions alone.

\section{Implementations}\label{sec5}
\par
In this section, we examine the implementation of mean-preserving contractions via martingale transports. We say $p \in \Delta(X \times \Omega)$ implements some $\mu \in MPC(\mu_0)$ if $p \in \mathcal{M}(\mu, \mu_0)$. Let $B(X \times \Omega)$ denote all the Borel subsets of $X \times \Omega$. A Borel set $N \in B(X \times \Omega)$ is an $\mathcal{M}(\mu,\mu_0)$-polar set if $N \subset \cap _{p \in M(\mu,\mu_0)} \mathcal{N}_p$, where $\mathcal{N}_p$ is the collection of all $p$-null sets. This means it is impossible to find a martingale transport that map between the projection of $N$ on $X$ and $\Omega$ in a nontrivial way. The following theorem reveals the remarkable flexibility of martingale transports in this setting:
\begin{Theorem}\label{Th3}
Suppose $\mu \in MPC(\mu_0)$ is concentrated on $X$ and Assumption \ref{As1} holds, a Borel set $N \in B(X \times \Omega)$ is $\mathcal{M}(\mu,\mu_0)$-polar if and only if either $\{N_X \in \mathcal{N}_\mu\}$ or $\{N_\Omega \in \mathcal{N}_{\mu_0}\}$, where $N_X$ and $N_\Omega$ are the projection of $N$ on $X$ and $\Omega$.\footnote{Theorem \ref{Th3} comes directly from Theorem 2.5 in \citet{de2019irreducible} where their $\{Y \notin J(X)\}$ becomes a $\mu_0$-null set under Assumption \ref{As1}.}
\end{Theorem}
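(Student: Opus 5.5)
Both directions will be obtained by specializing Theorem 2.5 of \citet{de2019irreducible}, as the footnote indicates. In their language, that theorem says a Borel set $N$ is $\mathcal{M}(\mu,\mu_0)$-polar if and only if
\[
N \subset \{X \in N_\mu\} \cup \{Y \in N_{\mu_0}\} \cup \{Y \notin J(X)\},
\]
where $J(x)$ is an enlargement of $A_x=ri\hat{supp}(\hat p_x)$. It satisfies $A_x\subset J(x)\subset cl(A_x)$ and collects the relative-boundary points that some transport can actually charge. So we must show that, under the hypotheses, the third set is $p$-negligible for every $p$ and can be absorbed into the first two. After that the characterization collapses to the stated product-type condition.

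\textbf{Step 1 (the concentration structure).} Since $\mu$ is concentrated on $X$, the definition gives $\hat{supp}(\hat p_x)=X$ for $\mu$-a.e.\ $x\in supp(\mu)$. Hence $A_x=ri(X)=int(X)$ and $J(x)\supset int(X)$ for $\mu$-a.e.\ $x$. The complement of $J(x)$ is therefore contained in $\Omega\setminus int(X)=\partial X$, which up to the singleton cells lies inside $\mathcal{V}$. This is the same set for $\mu$-a.e.\ $x$.

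\textbf{Step 2 (using Assumption \ref{As1}).} Assumption \ref{As1} gives $\mu_0(\partial X)=0$. I expect this to be the main point needing care. The set $\{Y\notin J(X)\}$ is contained in $\{X\in N\}\cup(X\times \partial X)$ for a $\mu$-null $N$. We need $\partial X\subset\mathcal{V}$ up to a $\mu_0$-null set. Points of $\partial X$ whose cell is a singleton are not in $\mathcal{V}$, but Corollary \ref{Co1} together with concentration handles them: if $\mu_0$ charged such a singleton outside $cl(ri X)$ it would be a separate cell. Since $\partial X= rbd(ri X)\subset \mathcal V$, in fact $\partial X$ is directly contained in $\mathcal{V}$. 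Thus $\{Y\notin J(X)\}\subset\{X\in N_\mu\}\cup\{Y\in N_{\mu_0}\}$.

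\textbf{Step 3 (conclusion).} If $N_X\in\mathcal N_\mu$ or $N_\Omega\in\mathcal N_{\mu_0}$, then $p(N)\le \mu(N_X)$ or $p(N)\le\mu_0(N_\Omega)$ for every $p\in\mathcal M(\mu,\mu_0)$, so $N$ is polar. This direction is immediate and needs no citation, apart from measurability issues with projections, which I handle by outer measure or analytic-set measurability. Conversely, suppose $N$ is polar. By Theorem 2.5 of \citet{de2019irreducible} and Step 2, $N\subset\{X\in N_\mu\}\cup\{Y\in N_{\mu_0}\}$. The remaining subtlety is that this gives a union rather than the stated dichotomy on projections. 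I would read the statement in the sense of de March--Touzi, namely $N$ is covered by $(N_1\times\Omega)\cup(X\times N_2)$ with null $N_1,N_2$. Equivalently, for product sets $N=N_X\times N_\Omega$ the dichotomy is exact. To see this directly, use the kernel $\hat p$ with $supp(\hat p_x)=supp(\mu_0)$ from the concentration property (cf.\ the corollary invoked in Proposition \ref{pr2}). Then $\hat p(N_X\times N_\Omega)=\int_{N_X}\hat p_x(N_\Omega)\,d\mu>0$ whenever both projections have positive measure, provided $\hat p_x\sim\mu_0$ restricted to $int(X)$. I would ensure this last equivalence by mixing $\hat p$ with countably many transports, which is the technical step I expect to be most delicate.
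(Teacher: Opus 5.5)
Your proposal follows the paper's own route: the paper's entire proof is the footnote invoking Theorem~2.5 of \citet{de2019irreducible}, with $\{Y\notin J(X)\}$ absorbed via Assumption~\ref{As1}, and your Steps~1--2 supply exactly that absorption. Two small repairs: delete the singleton-cell detour in Step~2, since $\partial X=rbd(ri(X))\subset\mathcal V$ however boundary points are partitioned; and note that $\mu(\partial X)\le\mu_0(\partial X)=0$ (kernels at boundary points live on a supporting face) is what upgrades Step~1 from $supp(\mu_{\mid ri(X)})$ to $\mu$-a.e.\ $x$. You are also right that only the union form follows. Read literally, the projection dichotomy is false: $N=(\{x_0\}\times\Omega)\cup(X\times\{y_0\})$ with $\mu(\{x_0\})=\mu_0(\{y_0\})=0$ is polar although $N_X=X$ and $N_\Omega=\Omega$. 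For product sets the dichotomy follows from the union form at once, because any $x\in N_X$ outside the $\mu$-null set forces all of $N_\Omega$ into the $\mu_0$-null set. So drop your final kernel argument, which as sketched would only give $supp(\hat p_x)=supp(\mu_0)$, not $\hat p_x\sim\mu_0$.
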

\par
In other words, the only polar sets are those that lie entirely outside the supports of $\mu$ and $\mu_0$. Any set intersecting both supports of $\mu$ and $\mu_0$ in a nontrivial way can be transported by some feasible martingale transport. 
\par 
One particularly important consequence of Theorem \ref{Th3} is that it ensures that for any point in the support of $\mu$, there exists a martingale transport whose conditional distribution spans the entire support of $\mu_0$. \footnote{it turns out to plays a crucial role in the proof of Theorem \ref{Th2}. }
\begin{Corollary}\label{Co2}
Suppose $\mu \in FMPC(\mu_0)$ is concentrated on $X$, and Assumption \ref{As1} holds, then there exists $ p \in \mathcal{M}(\mu, \mu_0)$ such that $\forall x \in supp(\mu)$, $supp(p_{x})=supp(\mu_0)$.
\end{Corollary}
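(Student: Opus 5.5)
The plan is to obtain the kernel by mixing finitely many martingale transports, one for each point of a finite family of rectangles covering the support of $\mu_0$ that Theorem \ref{Th3} guarantees are charged by some transport.

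\textbf{Step 1 (enough transports).} Write $supp(\mu)=\{x_1,\dots,x_m\}$; each $\mu(\{x_i\})>0$. Since $\Omega$ is compact, $supp(\mu_0)$ is compact and separable. Fix a countable base $\{U_k\}_{k\ge 1}$ of open balls meeting $supp(\mu_0)$; then $\mu_0(U_k)>0$ for every $k$. For each pair $(i,k)$ consider the Borel rectangle $N_{ik}=\{x_i\}\times U_k$. Its projections satisfy $\mu(\{x_i\})>0$ and $\mu_0(U_k)>0$, so neither projection is null. By Theorem \ref{Th3}, which applies because $\mu$ is concentrated on $X$ and Assumption \ref{As1} holds, $N_{ik}$ is not $\mathcal{M}(\mu,\mu_0)$-polar. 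Hence there is $p^{ik}\in\mathcal{M}(\mu,\mu_0)$ with $p^{ik}(N_{ik})>0$. Because $\mu$ is atomic at $x_i$, this means $p^{ik}_{x_i}(U_k)>0$.

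\textbf{Step 2 (mixing).} Set $p=\sum_{i,k}2^{-k}m^{-1}p^{ik}$. This is a countable convex combination, with weights summing to one, of elements of $\mathcal{M}(\mu,\mu_0)$. It still has marginals $\mu$ and $\mu_0$, since marginals are linear and the series converges in total variation. The martingale condition also survives. Because all $p^{ik}$ share the first marginal $\mu$, which is finitely supported, the kernel of the mixture at each $x_j$ is the same mixture of the kernels, $p_{x_j}=\sum 2^{-k}m^{-1}p^{ik}_{x_j}$. Each $p^{ik}_{x_j}$ has barycenter $x_j$, and so does their mixture; boundedness of $\Omega$ lets the barycenter pass through the countable sum.

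\textbf{Step 3 (full support).} For each $j$ and each $k$ we have $p_{x_j}(U_k)\ge 2^{-k}m^{-1}p^{jk}_{x_j}(U_k)>0$. So $p_{x_j}$ charges every basic open set meeting $supp(\mu_0)$, giving $supp(p_{x_j})\supseteq supp(\mu_0)$. Conversely, $\mu_0=\sum_j\mu(\{x_j\})p_{x_j}$ with all weights positive, so $p_{x_j}\le \mu_0/\mu(\{x_j\})$ and $supp(p_{x_j})\subseteq supp(\mu_0)$. Equality follows.

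The main obstacle is Step 1. It relies entirely on reading Theorem \ref{Th3} correctly: a set is non-polar precisely when both of its projections are non-null. The footnote's reduction of de March--Touzi's condition $\{Y\notin J(X)\}$ to a $\mu_0$-null set under Assumption \ref{As1}, together with concentration on $X$ making $J(x_i)\supseteq supp(\mu_0)$ off null sets, is exactly what is needed. I would verify that this applies to rectangles whose first factor is a single atom. The remaining steps are routine linearity and countable-mixture arguments.
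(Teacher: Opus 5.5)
Your proof is correct and follows the paper's argument. The paper also applies Theorem \ref{Th3} to the rectangles $\{x\}\times U_k$ over a countable base and mixes with weights $2^{-k}$ to get full support at each atom, then mixes over the atoms; this is exactly your single mixture with weights $2^{-k}m^{-1}$, and your explicit checks of $supp(p_{x_j})\subseteq supp(\mu_0)$ and of the martingale property of the countable mixture fill in steps the paper leaves implicit.
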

\textbf{Proof:} Given $x \in supp(\mu)$, we know $\{x\} \notin \mathcal{N}_\mu$. For any $\omega \in supp(\mu_0)$, every open neighborhood of $\omega$
has positive measure under $\mu_0$, so every relatively open subset of 
$supp(\mu_0)$ has positive measure.\footnote{By relatively open subset of $supp(\mu_0)$, we mean $U \cap supp(\mu_0)$, where $U$ is a open neighborhood of $\omega$.} We can choose a countable base $\{U_k\}_k$ of relatively open subsets of $supp(\mu_0)$, according to Theorem \ref{Th3}, for any $k$, $(x,U_k)$ is non-polar, i.e. $\exists p^x_k \in \mathcal{M}(\mu,\mu_0)$,$p^x_k((x,U_k))>0$. Then define $p^x=\sum^\infty_{k=1}2^{-k}p^x_k$, notice $\mathcal{M}(\mu,\mu_0)$ is a convex set, so $p^x \in \mathcal{M}(\mu,\mu_0)$. We can verify that $p^x((x,U_k))\geq 2^{-k}p^x_k((x,U_k))>0$, so $p^x_x(U)>0$ for every relatively open $U \subset supp(\mu_0)$. According to the definition of support, $supp(p^x_x)=supp(\mu_0)$.
\par
Given a set of parameters $\{\beta^x\}_{x \in supp(\mu)}$ such that each $\beta^x \in (0,1)$ and $\sum_x \beta^x=1$, define $p=\sum_x \beta^x p^x$. Since $\mathcal{M}(\mu,\mu_0)$ is a convex set, so $p \in \mathcal{M}(\mu,\mu_0)$. For any $x \in supp(\mu)$, $supp(p_x)=supp(\mu_0)$. $\hfill\blacksquare$

\section{Moment Persuasion}\label{sec6}
We now apply our characterization of extreme points of the mean-preserving contraction set to the moment-persuasion problem, a special case of Bayesian persuasion in which the sender's utility is independent of the realized state, and the receiver's optimal action depends solely on her posterior mean.\footnote{By redefining the state space, we can extend posterior mean case to any moments of the posterior belief, see section 4 in \citet{dworczak2024persuasion}.} Under these assumptions, the sender's problem reduces to choosing a distribution over posterior means that is feasible under the prior $\mu_0$. By the classical equivalence between feasible posterior-mean distributions and mean-preserving contractions, the sender's feasible set is precisely $MPC(\mu_0)$. Thus, the sender's optimization problem can be written as

      \begin{displaymath}
            \max_{\mu \in MPC(\mu_0)}\mathbb{E}_{x\sim\mu}[U(x)]
            \end{displaymath}
\par
If U(x), sender's indirect utility, is a upper semicontinuous and bounded function, then the objective function is a linear upper semicontinuous functional over set $MPC(\mu_0)$. From \citet{elton1992fusions}, we know 
\begin{Lemma} \label{lm5}
 Let $\mu_0 \in \Delta(\Omega)$ be a probability measure defined on a convex and compact set $\Omega$ of $\mathbb{R}^n$. Then, the set $MPC(\mu_0)$ is convex and compact. 
\end{Lemma}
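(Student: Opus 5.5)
Convexity and compactness are proved separately; the work is in compactness, and specifically in showing that the set is closed.

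\emph{Convexity.} Take $\mu_1,\mu_2 \in MPC(\mu_0)$, $\alpha \in [0,1]$ and any convex $f:\Omega\to\mathbb{R}$. Linearity of the integral in the measure gives
$$\int f\, d(\alpha\mu_1+(1-\alpha)\mu_2)=\alpha\int f d\mu_1+(1-\alpha)\int f d\mu_2 \le \int f d\mu_0 .$$
So $\alpha\mu_1+(1-\alpha)\mu_2 \preccurlyeq \mu_0$. An equivalent route goes through martingale transports. By \citet{strassen1965existence} pick $p_i\in\mathcal{M}(\mu_i,\mu_0)$. Then $\alpha p_1+(1-\alpha)p_2$ has first marginal the mixture, second marginal $\mu_0$, and it still satisfies the martingale condition $\int (\omega - x)g(x)\,dp=0$ for all bounded continuous $g$.

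\emph{Compactness.} Since $\Omega$ is compact, Prokhorov's theorem makes $\Delta(\Omega)$ weakly compact, and it is metrizable. Every $\mu\in MPC(\mu_0)$ lives on $\Omega$. Indeed, any $\mu\preccurlyeq\mu_0$ is supported in $\hat{supp}(\mu_0)\subset\Omega$: if $\mu$ put mass outside this closed convex set, a distance-type convex function (zero on the set, positive outside) would violate the inequality. It therefore suffices to show $MPC(\mu_0)$ is weakly closed. Take $\mu_k\to\mu$ weakly with $\mu_k\preccurlyeq\mu_0$.

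\emph{Closedness, the main step.} Convex test functions need not be continuous on the boundary of $\Omega$, so weak convergence cannot be applied to them directly. I would first reduce the definition of convex order on compact $\Omega$ to continuous convex functions. Any convex $f$ that is finite on $\Omega$ is the increasing pointwise limit of affine minorants $\max_{j\le m}\ell_j$, which are continuous. By monotone convergence (the functions are bounded below by an affine function, hence integrable on compact $\Omega$), the inequality for all continuous convex $f$ implies it for all convex $f$. For continuous convex $f$, $f$ is bounded and continuous on $\Omega$, so $\int f d\mu=\lim_k\int f d\mu_k\le\int f d\mu_0$, and hence $\mu\in MPC(\mu_0)$.

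Alternatively, the martingale route gives closedness directly. Choose $p_k\in\mathcal{M}(\mu_k,\mu_0)$. These lie in the compact set $\Delta(\Omega\times\Omega)$, so a subsequence converges to some $p$ with marginals $\mu$ and $\mu_0$. The martingale identity $\int(\omega-x)g(x)\,dp_k=0$ passes to the limit for continuous $g$, because $(\omega-x)g(x)$ is bounded and continuous on $\Omega\times\Omega$. Hence $p\in\mathcal{M}(\mu,\mu_0)$, and Strassen gives $\mu\in MPC(\mu_0)$.

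The obstacle I expect is the regularity issue of convex functions at the boundary of $\Omega$: they may be discontinuous there, or unbounded if taken to be extended-valued. The affine-minorant approximation, or equivalently the martingale-coupling argument, is what gets around it.
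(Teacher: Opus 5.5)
Your proof is correct. It is also more self-contained than the paper's treatment: the paper gives no argument for Lemma~\ref{lm5} and simply cites \citet{elton1992fusions}.

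\textbf{Comparison with the paper.} The only closedness argument the paper writes out is the first half of the proof of Lemma~\ref{lm1}. That argument tests against continuous convex $\phi$, truncates to $\min(\phi,Q)$ and $\max(\phi,P)$, and passes to the limit by monotone and dominated convergence. The truncation is what lets it work without compactness; only finite first moments are needed.
\begin{itemize}
\item Your direct route is the compact-domain version of that argument. A convex $f:\mathbb{R}^n\to\mathbb{R}$ is continuous, hence bounded on $\Omega$, so weak convergence applies to it immediately and no truncation is required.
\item Your martingale route is genuinely different. It never touches test functions. It uses tightness of couplings in $\Delta(\Omega\times\Omega)$, stability of $\int(\omega-x)g(x)\,dp=0$ under weak limits, and Strassen's theorem. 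As a by-product it shows that limits of implementing transports implement the limit measure, which fits the viewpoint of Section~\ref{sec5}.
\item You also supply a step the paper leaves implicit: every $\mu\preccurlyeq\mu_0$ is carried by $\hat{supp}(\mu_0)\subset\Omega$ (via the distance function). So $MPC(\mu_0)$ genuinely sits inside the compact space $\Delta(\Omega)$.
\end{itemize}

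\textbf{A false claim in your first route.} You assert that any finite convex $f$ on $\Omega$ is the increasing limit of maxima of its affine minorants. This holds only when $f$ is lower semicontinuous. In general the limit is the lower semicontinuous hull of $f$, which can differ from $f$ on the boundary. For example, take $\Omega=[0,1]$ with $f=0$ on $[0,1)$ and $f(1)=1$. Moreover, finite convex functions on $\Omega$ need not even be Borel measurable, so ``all convex $f$ on $\Omega$'' is not a workable test class anyway. Drop this step.

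\textbf{Why this does not damage the proof.} The paper defines convex order through convex $f:\mathbb{R}^n\to\mathbb{R}$, which are automatically continuous. The boundary-regularity obstacle you anticipate therefore never arises. Your line ``for continuous convex $f$, $\int f\,d\mu=\lim_k\int f\,d\mu_k\le\int f\,d\mu_0$'' already finishes the argument. The martingale route is immune to the issue in any case.
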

By Bauer's maximum principle, one of its maximizers is an extreme point of $MPC(\mu_0)$. The characterization is just as Theorem \ref{Th4} provides.
\par
The two conditions in Theorem \ref{Th4} jointly characterize one of the optimal signals in moment persuasion. In particular: 
\begin{itemize}
\item \textbf{Fully revealing regions} correspond to singleton cells. On such regions, the sender cannot profitably pool states, and the optimal signal reveals the state exactly.
\item \textbf{Pooling regions} correspond to higher-dimensional cells. Within each such region, the sender may pool states, but only in a way that preserves affine independence of the posterior means.
\item \textbf{Lower-dimensional pooling} naturally arises when a convex cell has dimension strictly less than $n$. This phenomenon is absent in one-dimensional persuasion but can happen in multidimensional settings.
\end{itemize}
These features refine and generalize the structure of optimal signals relative to the one-dimensional case, where each cell reduces to an interval and affine independence restricts the support to at most two points.
\section{Conclusion}

This paper develops a comprehensive geometric characterization of mean-preserving contractions in multidimensional settings and applies this characterization to the analysis of moment persuasion. As \citet{kleiner2024extreme} notice that set of exposed points is not equivalent to the set of extreme points for multidimensional mean preserving contractions, it is interesting to study the exposed points of mean preserving contractions in the multidimensional case. The recent paper by 
\citet{yang2026stochastic} characterizes the exposed points for a special class of stochastic orders where the set of test functions are closed under pointwise minimum. As \citet{ciosmak2023localisation} proves, the concentration property in this paper is satisfied for a very broad class of stochastic orders, it is worth to study whether the concentration property will be helpful in characterizing the extreme points for the special class of stochastic orders in \citet{yang2026stochastic}.


\bibliographystyle{apalike}
\bibliography{ref}

@article{kleiner2021extreme,
  title={Extreme points and majorization: Economic applications},
  author={Kleiner, Andreas and Moldovanu, Benny and Strack, Philipp},
  journal={Econometrica},
  volume={89},
  number={4},
  pages={1557--1593},
  year={2021},
  publisher={Wiley Online Library}
}

@article{arieli2023optimal,
  title={Optimal persuasion via bi-pooling},
  author={Arieli, Itai and Babichenko, Yakov and Smorodinsky, Rann and Yamashita, Takuro},
  journal={Theoretical Economics},
  volume={18},
  number={1},
  pages={15--36},
  year={2023},
  publisher={Wiley Online Library}
}

@article{dworczak2024persuasion,
  title={The persuasion duality},
  author={Dworczak, Piotr and Kolotilin, Anton},
  journal={Theoretical Economics},
  volume={19},
  number={4},
  pages={1701--1755},
  year={2024},
  publisher={Wiley Online Library}
}

@article{malamud2021persuasion,
  title={Persuasion by Dimension Reduction},
  author={Malamud, Semyon and Schrimpf, Andreas},
  journal={arXiv preprint arXiv:2110.08884},
  year={2021}
}

@article{kamenica2011bayesian,
  title={Bayesian persuasion},
  author={Kamenica, Emir and Gentzkow, Matthew},
  journal={American Economic Review},
  volume={101},
  number={6},
  pages={2590--2615},
  year={2011},
  publisher={American Economic Association}
}

@article{dworczak2019simple,
  title={The simple economics of optimal persuasion},
  author={Dworczak, Piotr and Martini, Giorgio},
  journal={Journal of Political Economy},
  volume={127},
  number={5},
  pages={1993--2048},
  year={2019},
  publisher={The University of Chicago Press Chicago, IL}
}

@article{kolotilin2017persuasion,
  title={Persuasion of a privately informed receiver},
  author={Kolotilin, Anton and Mylovanov, Tymofiy and Zapechelnyuk, Andriy and Li, Ming},
  journal={Econometrica},
  volume={85},
  number={6},
  pages={1949--1964},
  year={2017},
  publisher={Wiley Online Library}
}

@article{ciosmak2023localisation,
  title={Localisation for constrained transports I: theory},
  author={Ciosmak, Krzysztof J},
  journal={arXiv preprint arXiv:2312.12281},
  year={2023}
}

@article{obloj2017structure,
  title={Structure of martingale transports in finite dimensions},
  author={Ob{\l}{\'o}j, Jan and Siorpaes, Pietro},
  journal={arXiv preprint arXiv:1702.08433},
  year={2017}
}

@article{de2019irreducible,
  title={Irreducible convex paving for decomposition of multidimensional martingale transport plans},
  author={De March, Hadrien and Touzi, Nizar},
  journal={The Annals of Probability},
  volume={47},
  number={3},
  pages={1726--1774},
  year={2019},
  publisher={JSTOR}
}

@article{blackwell1953equivalent,
  title={Equivalent comparisons of experiments},
  author={Blackwell, David},
  journal={The annals of mathematical statistics},
  pages={265--272},
  year={1953},
  publisher={JSTOR}
}

@article{strassen1965existence,
  title={The existence of probability measures with given marginals},
  author={Strassen, Volker},
  journal={The Annals of Mathematical Statistics},
  volume={36},
  number={2},
  pages={423--439},
  year={1965},
  publisher={Institute of Mathematical Statistics}
}

@article{larman1971compact,
  title={A compact set of disjoint line segments in E3 whose end set has positive measure},
  author={Larman, David G},
  journal={Mathematika},
  volume={18},
  number={1},
  pages={112--125},
  year={1971},
  publisher={London Mathematical Society}
}

@article{lang1986note,
  title={A note on the measurability of convex sets},
  author={Lang, Robert},
  journal={Archiv der Mathematik},
  volume={47},
  number={1},
  pages={90--92},
  year={1986},
  publisher={Springer}
}

@article{kleiner2024extreme,
  title={The extreme points of fusions},
  author={Kleiner, Andreas and Moldovanu, Benny and Strack, Philipp and Whitmeyer, Mark},
  journal={arXiv preprint arXiv:2409.10779},
  year={2024}
}

@book{villani2008optimal,
  title={Optimal transport: old and new},
  author={Villani, C{\'e}dric},
  volume={338},
  year={2008},
  publisher={Springer}
}

@article{elton1992fusions,
  title={Fusions of a probability distribution},
  author={Elton, J and Hill, Theodore P},
  journal={The Annals of Probability},
  pages={421--454},
  year={1992},
  publisher={JSTOR}
}

@article{yang2026stochastic,
  title={Stochastic Optimization and Coupling},
  author={Yang, Frank and Yang, Kai Hao},
  journal={arXiv preprint arXiv:2603.11448},
  year={2026}
}

@book{oxtoby2013measure,
  title={Measure and category: A survey of the analogies between topological and measure spaces},
  author={Oxtoby, John C},
  year={2013},
  publisher={Springer Science \& Business Media}
}

\newpage
\appendix
\section{Omitted Proofs} 

\noindent\textbf{Proof of Lemma \ref{lm3}.} 
Since $\mathbb{R}^n$ is separable, any family of pairwise disjoint nonempty open subsets is at most countable (see Thm. 2.8 in \citet{oxtoby2013measure}). Because each full-dimensional convex cell contains a nonempty open set (its interior), the convex partition has at most countably many n-dimensional cells.  
\par
if $n=1$, we have at most countable many $A_x$ that are not singleton. The relative boundary of each interval $A_x$ is zero Lebesgue measure. Since Lebesgue measure is additive for countable many sets, $\mathcal{V}$ is zero Lebesgue measure. 
\par 
Since the proof for $n=2$ relies on several new established results, we treat this case separately as Lemma \ref{lm10}. See the proof of Lemma \ref{lm10}.  
\par
If $n\geq 3$, by Theorem 1 from \citet{larman1971compact}, there exists a disjoint set $L$ of closed line segments in $\mathbb{R}^n$ such that the Lebesgue measure of the set of end points(the relative boundaries of line segments) is positive, where L(and hence the set of end points of all line segments) is compact. Let the rest of $\Omega$ being singleton sets, combining them with $L$, it form a convex partition and $\mathcal{V}$ is compact, has positive measure. $\hfill\blacksquare$ \\

\begin{Lemma}\label{lm10}
Let $\{A_i\}_{i \in I}$ be a pairwise disjoint family of non-singleton, relatively open, convex subsets of $\mathbb{R}^2$.
Suppose that \[\mathcal{V}:=\bigcup_{i \in I}\operatorname{rbd} A_i\] is Lebesgue measurable. Then \[\mathcal L^2(\mathcal{V})=0.\]
\end{Lemma}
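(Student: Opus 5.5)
The plan is to split the family $\{A_i\}_{i\in I}$ by dimension. In $\mathbb{R}^2$ each non-singleton relatively open convex set is either two-dimensional (an open convex set) or one-dimensional (an open line segment). Write $I=I_2\cup I_1$ accordingly, so that
\[
\mathcal{V}=\bigcup_{i\in I_2}\operatorname{rbd}A_i\;\cup\;\bigcup_{i\in I_1}\operatorname{rbd}A_i .
\]

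\emph{The two-dimensional cells.} As in the footnote preceding Lemma \ref{lm3}, the sets $A_i$ with $i\in I_2$ are pairwise disjoint nonempty open sets. By separability of $\mathbb{R}^2$ (Thm.\ 2.8 in \citet{oxtoby2013measure}) there are only countably many of them. The boundary of each convex set is Lebesgue-null \citep{lang1986note}, so countable subadditivity makes the first union null.

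\emph{The segments.} Here $\operatorname{rbd}A_i$ is the pair of endpoints of the open segment $A_i$, for $i\in I_1$. Let $E$ be the union of all these endpoints. This set is a subset of $\mathcal{V}$, and $E$ differs from the measurable set $\mathcal{V}$ only by the null set of the first union. The main tool is Larman's first lemma \citep{larman1971compact}: in the plane, the set of endpoints of any family of pairwise disjoint closed line segments has Lebesgue measure zero. We cannot apply it directly. Our segments are disjoint only as open segments, so their closures may meet. An endpoint of one segment may lie inside or at the end of another segment, or on a two-dimensional cell's boundary.

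\emph{The main obstacle is repairing this disjointness.} I would do it in two steps.

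First, for each $i\in I_1$ and each endpoint $e$ of $A_i$, replace $\operatorname{cl}A_i$ by a short closed subsegment $S_{i,e}\subset \operatorname{cl}A_i$. This subsegment contains $e$ as an endpoint and has length at most a third of $|A_i|$. The open parts of distinct segments are disjoint, so two such short segments can meet only at common endpoints. They can also meet where an endpoint of one lies on the other, in which case $e\in A_j$ for some $j\neq i$ is impossible by disjointness of the cells. Hence all contacts occur at shared endpoints.

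Second, a point $e$ of the plane at which $S_{i,e}$ and $S_{j,e}$ share an endpoint must be handled. Through a given point, the segments emanating in distinct directions can be indexed by directions. Because the $A_i$ are disjoint open segments, the points $e$ that are endpoints of at least two segments should be controlled. I would try to show that, except for a null set, every endpoint belongs to exactly one segment. The idea is a planar argument: for rational $\varepsilon$, two segments of length $\ge\varepsilon$ sharing an endpoint $e$ bound a wedge near $e$. Distinct such wedges, for distinct $e$, contain disjoint open regions, because any segment crossing into the wedge would have to cross one of the two bounding segments. This yields only countably many such $e$ for each $\varepsilon$.

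After discarding this countable set, choose one segment per remaining endpoint and shrink it as above. The result is a genuinely pairwise disjoint family of closed segments. Its endpoint set contains $E$ minus a countable set, so Larman's lemma gives $\mathcal L^2(E)=0$.

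Measurability of $\mathcal{V}$ is used only to ensure that ``null'' is meaningful, that is, that outer measure zero gives $\mathcal L^2(\mathcal{V})=0$. Combining the two parts then yields $\mathcal L^2(\mathcal{V})=0$.
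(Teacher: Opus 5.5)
Your outline matches the paper's sketch: countably many two-dimensional cells with null boundaries, then Larman's planar lemma for the segments after repairing contacts between closures. The repair step, however, has a genuine gap.

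\textbf{T-junctions.} You assert that an endpoint $e$ of $A_i$ cannot lie in $A_j$, $j\neq i$, ``by disjointness of the cells.'' This is false. Since $e\notin A_i$, the condition $A_i\cap A_j=\emptyset$ says nothing about whether $e\in A_j$. Take the comb $A_0=(0,1)\times\{0\}$, $A_x=\{x\}\times(0,1)$, $x\in(0,1)$. Your final family contains $[0,\tfrac13]\times\{0\}$ and $\{x\}\times[0,\tfrac13]$, which meet at $(x,0)$ for every $x\le\tfrac13$. These contacts accumulate at $(0,0)$, so no shortening of the segment at $(0,0)$ removes them, and Larman's lemma does not apply.

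You need a separate argument that the set $T$ of endpoints lying in some other open segment is null. One that works is the following. Let $T_k$ be the set of $e\in T$ admitting a stub $A_i$ of length at least $1/k$, a through-segment $A_j\ni e$ containing the chord of length $2/k$ centred at $e$, and an angle between them in $[1/k,\pi-1/k]$. Suppose $e'\in T_k\setminus A_i$ lies within $1/(8k^2)$ of $e$, strictly on the side of $A_j$ into which $A_i$ points. Then both halves of the through-segment of $e'$ would have to leave the convex sector cut out of $B(e,1/(2k))$ by $A_j$ and $A_i$ (opening at most $\pi-1/k$) through its arc, in opposite directions. That is impossible. Hence every point of $T_k$ has upper outer density at most $1/2$ in $T_k$, so $\lambda^*(T_k)=0$ (outer Lebesgue measure), and $T=\bigcup_k T_k$ is null. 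After discarding $T$, choosing one short segment $S_{i(e),e}$ per remaining endpoint already gives a pairwise disjoint family. Two such segments can only meet at a point $e\in A_{i(e')}$, i.e.\ a point of $T$.

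\textbf{Shared endpoints and measurability.} The step you singled out as the main obstacle is therefore unnecessary, since shared endpoints are harmless once one segment is chosen per endpoint. It is also false as stated. The V-shapes with vertex $(0,t)$ and arms ending at $(\pm1,t+1)$, $t\in(0,1)$, have pairwise disjoint open arms and uncountably many shared vertices. Their wedges are nested, so no disjoint open regions can be assigned to them.

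Worse, break each $(0,1)\times\{y\}$ at $(f(y),y)$, with $f$ built by transfinite induction so that its graph meets every closed set of positive measure. This gives a family satisfying every hypothesis except measurability whose $\mathcal V$ has full outer measure in the unit square. The closed, pairwise disjoint segments $[f(y)/2,f(y)]\times\{y\}$ show the same in Larman's setting.

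So measurability is not a formality. The lemma fails without it, and Larman's planar result can only be used in the form ``the end set contains no measurable set of positive measure.'' Your last step has to run as follows. $E$ is measurable because it differs from $\mathcal V$ by a subset of a null set, so $E\setminus T$ is measurable. It is contained in the end set of the repaired family, hence null. Note that this end set also contains the new inner endpoints, so it is neither a subset of $\mathcal V$ nor known to be measurable.
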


\noindent\textbf{Proof of Lemma \ref{lm1}.} Sufficient condition: Given a sequence $\{\mu_i\}$ that each $\mu_i \in FMPC(\mu_0)$ and $\lim_{i \rightarrow \infty } \mu_i \rightarrow {\mu}$, we want to prove $\mu$ belongs to $MPC(\mu_0)$. 
\par
According to the definition of $MPC(\mu_0)$, we want to prove for any convex and continuous function $\phi$ defined on $X$, $\int_X \phi d \mu \leq \int_X \phi d \mu_0$. Assume first $\phi$ is a nonnegative, convex and continuous function. For $Q \geq 0$, let $\phi^Q =min(\phi , Q)$. Then $\int_X \phi^Q d \mu_i \rightarrow \int_X \phi^Q d\mu$ by weak convergence, so $\int_X \phi^Q d \mu \leq \limsup_i \int_X \phi d\mu_i \leq \int_X \phi d\mu_0$ for all $Q$. Letting $Q \rightarrow \infty$, $\int_X \phi d \mu \leq \int_X \phi d \mu_0$ follows by the monotone convergence theorem. 
\par
Next assume $\phi$ is an arbitrary convex and continuous function defined on $X$. For $P \leq 0$, let $\phi_P = max(\phi, P)$. Notice $\phi_P + \mid P \mid$ is a nonnegative, convex and continuous function. So $\int_X \phi_P+ \mid P \mid d \mu \leq \int_X \phi_P+ \mid P \mid d \mu_0$, which implies $\int_X \phi_P d\mu \leq \int_X \phi_P d \mu_0$. Followed by the dominated convergence theorem, since $\phi_P \rightarrow \phi$ as $P \rightarrow -\infty$ and  $\mid \phi_P \mid \leq \mid \phi \mid$, $\int_X \phi d\mu \leq \int_X \phi d \mu_0$. 
\par
Necessary condition: Given any $\mu \in MPC(\mu_0)$, we want to prove there exists a sequence ${\mu_i}$ that each $\mu_i \in FMPC(\mu_0)$ and $\lim_{i \rightarrow \infty} \mu_i \rightarrow \mu$. 
\par
Let $\{\epsilon_i\}_{i=1}^\infty$ be a positive sequence that converges to 0. For each $i$, let $\{X_{ij}\}_{j=1}^{n_i}$ be a convex partition of $X$ with $diam(X_{ij})< \epsilon_i$. For each $j$, let $x_{ij} = \frac{\int_{X_{ij}} x d\mu}{\mu(X_{ij})}$ , the barycenter of $\mu$ on $X_{ij}$. If $\mu(X_{ij}) =0$, then let $x_{ij}$ be any point in the interior of $X_{ij}$. let $\mu_i= \sum_{j=1}^{n_i} \mu(X_{ij}) \delta_{x_{ij}}$, where $\delta_x$ is the Dirac measure. Notice for each $i$, $\mu_i$ is finitely supported. By Jensen Inequality, each $\mu_i$ is in convex order with $\mu$, certainly is in convex order with $\mu_0$. So each $\mu_i$ belongs to $FMPC(\mu_0)$. Now let $f$ be a bounded uniformly continuous real-valued function on $X $ and suppose $\lvert f(x_1) - f(x_2) \rvert < d(\epsilon)$ whenever $\lvert \lvert x_1 - x_2 \rvert \rvert \leq \epsilon$, thus $d(\epsilon) \rightarrow 0$ as $\epsilon \rightarrow 0$.  Then we have the following, $\lvert \int _X f d\mu_i -\int_X f d\mu \rvert  \leq \sum_{j=1}^{n_i} \lvert \int_{X_{ij}} f d \mu_i - \int_{X_{ij}} f d\mu \rvert  \leq \sum_{j=1}^{n_i} \sup _{x_{j1}, x_{j2} \in X_{ij}} \lvert f(x_{j1}) -f(x_{j2}) \rvert \mu(X_{ij}) \leq d(\epsilon_i)$. Since $\epsilon_i \rightarrow 0$ as $i \rightarrow \infty$, then $d(\epsilon_i) \rightarrow 0$ as $i \rightarrow \infty$, which holds for all bounded uniformly continuous real-valued function on $X$. This implies $\lim_{i \rightarrow \infty} \mu_i \rightarrow \mu$.  $\hfill\blacksquare$  \\
\textbf{Proof of Lemma \ref{lm2}.} Let $\{\epsilon_i\}_{i=1}^\infty$ be a sequence of positive numbers such that $\epsilon_i \in (0,1)$ for each $i$ and $\lim_{i \rightarrow \infty} \epsilon_i \rightarrow 0$. Notice $\delta_x$ belongs to $MPC(\mu_0)$ and is concentrated on $X$, where $x$ is the barycenter of $\mu_0$. From lemma \ref{lm1}, there exists a sequence of probability measure, $\{\nu_i\}_{i=1}^\infty$, such that $\lim_{i \rightarrow \infty} \nu_i \rightarrow \mu$, $\nu_i \in FMPC(\mu_0)$ for each $i$. 
   \par   
      Define $\mu_i=(1-\epsilon_i) \nu_i +\epsilon_i \delta_x$. Let's verify $\{\mu_i\}_{i=1}^\infty$ satisfies all the conditions.
      \begin{itemize}
      \item Since $\nu_i$ and $\delta_x$ belongs to $FMPC(\mu_0)$, $\mu_i \in FMPC(\mu_0)$. 
      
      \item Notice $\delta_x$ is concentrated on $X$. Then for each $\mu_i$, since $\epsilon_i >0$, there exist $p \in \mathcal{M}(\mu_i,\mu_0)$ such that $supp (p_x)=X$, thus $\mu_i$ is concentrated on $X$.
      \item $\lim_{i \rightarrow \infty} \mu_i =\lim_{i \rightarrow \infty}[(1-\epsilon_i) \nu_i +\epsilon_i \delta_x] = \lim_{i \rightarrow \infty} \nu_i \rightarrow \mu$. 
      \end{itemize}  $\hfill\blacksquare$ \\

\begin{Lemma}\label{lm6}
Suppose $\mu \in FMPC(\mu_0)$ is concentrated on $X$, and $supp(\mu) \subset ri(X)$, then there exists $ p \in \mathcal{M}(\mu, \mu_0)$ such that $\forall x \in supp(\mu)$, $supp(\mu_0) \cap ri(conv(supp(\mu_0))) \subset supp(p_{x})$.
\end{Lemma}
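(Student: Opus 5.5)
The plan is to mimic the proof of Corollary \ref{Co2}, but to use only the concentration property of $\mu$ rather than Theorem \ref{Th3}. Since $\mu$ is concentrated on $X$, the definition provides $\hat{p} \in \mathcal{M}(\mu,\mu_0)$ with $\hat{supp}(\hat{p}_x)=X$ for every $x\in supp(\mu)$. So each kernel $\hat{p}_x$ has closed convex hull of its support equal to $X$. Because $\mu_0=\sum_x \mu(x)\hat{p}_x$, this forces $cl(conv(supp(\mu_0)))=X$. The task is to upgrade ``the convex hull of the support is everything'' to ``the support itself contains every atom and every point of $supp(\mu_0)$ lying in $ri(conv(supp(\mu_0)))$.'' As in Corollary \ref{Co2}, it suffices to produce, for each fixed $x\in supp(\mu)$ (finitely many) and each basic relatively open set $U_k$ meeting $supp(\mu_0)\cap ri(conv(supp(\mu_0)))$, some $p^{x}_k\in\mathcal{M}(\mu,\mu_0)$ with $p^{x}_k(\{x\}\times U_k)>0$. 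We then take $p=\sum_x\beta^x\sum_k 2^{-k}p^x_k$, and convexity of $\mathcal{M}(\mu,\mu_0)$ finishes the argument exactly as before.

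The key step is a local mass-swapping construction. Fix $x$ and $U=U_k$. Some $y\in supp(\mu)$ has $\hat{p}_y(U)>0$, since $\mu_0(U)>0$. If $y=x$ we are done, so suppose $y\neq x$. Because $x\in ri(X)$ and $\hat{supp}(\hat{p}_x)=X$, the kernel $\hat{p}_x$ puts mass near enough directions that we can do the following. Take a small piece $\nu\le \hat p_y|_U$ of mass $\eta$ with barycenter $b\in U$. Then find a sub-measure $\nu'\le\hat p_x$ of the same mass whose barycenter equals $b$, up to a correction that we fix as described next. Swapping $\nu$ and $\nu'$ between the kernels at $x$ and $y$ keeps both marginals and preserves each kernel's barycenter provided $bar(\nu)=bar(\nu')$.

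Matching the barycenters exactly is the main obstacle. I would first try splitting the swap into several pieces of $\hat p_x$ located in different directions around $b$. Since $b\in ri(conv(supp(\mu_0)))$ and $supp(\hat p_x)$ has convex hull dense in $X$, $b$ lies in the interior of the convex hull of finitely many small balls charged by $\hat p_x$. A suitable convex weighting of pieces from these balls, scaled by a small $\eta$, then has barycenter exactly $b$. The hypothesis $supp(\mu)\subset ri(X)$ and the restriction to $ri(conv(supp(\mu_0)))$ are used precisely here, because boundary points cannot be written as interior barycenters. After the swap, the new plan $\tilde p$ has $\tilde p_x(U)\ge \eta/\mu(x)>0$ and remains a martingale transport. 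One more check is needed: if $\hat p_x$ does not charge a neighborhood of $b$ while $\hat p_y$ does, we may need to route through a chain of points of $supp(\mu)$. Finiteness of $supp(\mu)$ makes such a chain argument terminate, which I expect to be the fiddliest but routine part.
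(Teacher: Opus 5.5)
Your proof is correct, but at the one substantive step it takes a different route from the paper. The outer structure matches the paper's: a countable base $\{U_k\}$, $p^x=\sum_k 2^{-k}p^x_k$, then $p=\sum_x\beta^x p^x$. The difference is in how you get $p^x_k$ with $p^x_k(\{x\}\times U_k)>0$.

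The paper simply cites the polar-set characterization (Theorem 2.5) of De March and Touzi. Since $\mu(\{x\})>0$, $\mu_0(U_k)>0$, and $U_k$ lies inside the single irreducible component $ri(X)=ri(conv(supp(\mu_0)))$, the set $\{x\}\times U_k$ is non-polar. You prove this non-polarity by hand with a barycenter-preserving swap.

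The swap works, and more directly than you fear. Take the base sets to be traces of balls contained in $int(X)$, so the barycenter $b$ of the piece taken from $\hat p_y$ lies in $int(X)$. Since $conv(supp(\hat p_x))$ is compact, it equals $\hat{supp}(\hat p_x)=X$. Hence $b$ is the barycenter of a nonzero sub-measure of $\hat p_x$, built from small balls around finitely many support points that surround $b$. This holds whether or not $\hat p_x$ charges a neighborhood of $b$. So the ``chain'' step is unnecessary: one swap between $x$ and any $y$ with $\hat p_y(U)>0$ suffices.

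Two bookkeeping points:
\begin{itemize}
\item Perform the swap on the joint slices $\mu(x)\hat p_x$ and $\mu(y)\hat p_y$, with equal joint mass and equal barycenter. If you swap equal masses of the normalized kernels, the second marginal changes whenever $\mu(x)\neq\mu(y)$.
\item The hypothesis $supp(\mu)\subset ri(X)$ enters at your first step, not in the barycenter matching. The paper's definition of ``concentrated on $X$'' only guarantees $\hat{supp}(\hat p_x)=X$ for atoms in $ri(X)$, so this hypothesis is what makes that property available at every atom.
\end{itemize}

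What each approach buys: the paper's proof is a short citation, but it leans on the full irreducible-paving machinery. It also does not check the boundary term $\{Y\notin J(X)\}$ in that theorem, which is harmless here because $U_k\subset ri(X)$. Your argument is self-contained and uses only the defining property of concentration plus finiteness of $supp(\mu)$. It also shows why the intersection with $ri(conv(supp(\mu_0)))$ is needed: the swap requires the target piece to have its barycenter strictly inside the hull of $supp(\hat p_x)$.
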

\textbf{Proof:} Given $x \in supp(\mu)$, we know $x \notin \mathcal{N}_\mu$. For any $\omega \in supp(\mu_0) \cap ri(conv(supp(\mu_0)))$, every open neighborhood of $\omega$ has positive measure under $\mu_0$, so every relatively open subset of $supp(\mu_0) \cap ri(conv(supp(\mu_0)))$ has positive measure. We can choose a countable base $\{U_k\}_k$ of relatively open subset of $supp(\mu_0) \cap ri(conv(supp(\mu_0)))$, according to Theorem 2.5 in \citet{de2019irreducible}, for any $k$, $(x,U_k)$ is non-polar, i.e. $\exists p^x_k \in \mathcal{M}(\mu,\mu_0)$,$p^x_k((x,U_k))>0$. Then define $p^x=\sum^\infty_{k=1}2^{-k}p^x_k$, notice $\mathcal{M}(\mu,\mu_0)$ is a convex set, so $p^x \in \mathcal{M}(\mu,\mu_0)$. We can verify that $p^x((x,U_k))\geq 2^{-k}p^x_k((x,U_k))>0$, so $p^x_x(U)>0$ for every relatively open subset $U$ of $supp(\mu_0) \cap ri(conv(supp(\mu_0)))$. According to the definition of support, $supp(\mu_0) \cap ri(conv(supp(\mu_0))) \subset supp(p^x_{x})$.
\par
Given a set of parameters $\{\beta^x\}_{x \in supp(\mu)}$ such that each $\beta^x \in (0,1)$ and $\sum_x \beta^x=1$, define $p=\sum_x \beta^x p^x$. Since $\mathcal{M}(\mu,\mu_0)$ is a convex set, so $p \in \mathcal{M}(\mu,\mu_0)$. For any $x \in supp(\mu)$, $supp(\mu_0) \cap ri(conv(supp(\mu_0))) \subset supp(p_{x})$. $\hfill\blacksquare$ 

\begin{Lemma}\label{lm7}
Suppose $\mu \in FMPC(\mu_0)$ and $\mu_{\mid ri(X)}$ is concentrated on $X$, then $\mu$ is a extreme point of $FMPC(\mu_0)$ only if
 $\mu$ has affine independent support on $ri(X)$.
\end{Lemma}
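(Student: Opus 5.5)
I will argue by contraposition, along the lines of Proposition \ref{pr2}, with Lemma \ref{lm6} in place of Corollary \ref{Co2}. Suppose $\mu \in FMPC(\mu_0)$, $\mu_{\mid ri(X)}$ is concentrated on $X$, and the finite set $S:=supp(\mu)\cap ri(X)=\{x_1,\dots,x_k\}$ is affinely dependent. I will construct $\mu^+ \neq \mu^-$ in $FMPC(\mu_0)$ with $\mu=\frac12\mu^++\frac12\mu^-$. Since $S$ is affinely dependent, there are $\beta_1,\dots,\beta_k$, not all zero, with $\sum_i\beta_i x_i=0$ and $\sum_i\beta_i=0$.

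\textbf{Step 1.} Points of $supp(\mu)$ on $rbd(X)$ stay untouched. Apply Lemma \ref{lm6} to the part of $\mu$ living in $ri(X)$; the restriction $supp(\mu)\subset ri(X)$ there is why the lemma is stated that way. Fixing the martingale transport on the boundary atoms, this gives $p\in\mathcal{M}(\mu,\mu_0)$ whose kernels $p_{x_i}$, $i=1,\dots,k$, all charge a common set $W:=supp(\mu_0)\cap ri(conv(supp(\mu_0)))$. Because $\mu_{\mid ri(X)}$ is concentrated on $X$, $W$ has positive $\mu_0$-mass in the relevant region. I will also check that the sub-probabilities $\mu(x_i)p_{x_i}$ restricted to $W$ are mutually non-singular, in the sense that each dominates a fixed positive measure $\eta$ on a neighbourhood. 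If this cannot be extracted from supports alone, I will get it by first mixing $p$ with a product-type perturbation.

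\textbf{Step 2.} Perturb the transport. Write $\mu_0=\sum_i \mu_i + \mu_0^{\partial}$, where $\mu_i:=\mu(x_i)p_{x_i}$ has mass $\mu(x_i)$ and barycenter $x_i$, and $\mu_0^{\partial}$ is the part transported to the boundary atoms. Set $\mu_i^{\pm}:=(1\pm\varepsilon\beta_i)\mu_i$. These are nonnegative for small $\varepsilon$ and satisfy $\sum_i\mu_i^{\pm}=\mu_0-\mu_0^{\partial}\pm\varepsilon\sum_i\beta_i\mu_i$. This total does not reproduce $\mu_0$ unless $\sum_i\beta_i\mu_i=0$, so the naive rescaling must be corrected. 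Following Proposition 1 of \citet{kleiner2024extreme}, I will instead move mass between the kernels. Choose signed measures $\sigma_i$ with $|\sigma_i|\le \mu_i$, $\sum_i\sigma_i=0$, $\sigma_i(X)=\beta_i$ and $\int x\,d\sigma_i=\beta_i x_i$. Then $\mu_i\pm\varepsilon\sigma_i$ are nonnegative, sum to $\mu_0-\mu_0^\partial$, and have masses $\mu(x_i)(1\pm\varepsilon\beta_i/\mu(x_i))$ and barycenters exactly $x_i$. The resulting $\mu^{\pm}:=\sum_i(\mu(x_i)\pm\varepsilon\beta_i)\delta_{x_i}+\mu_{\mid rbd(X)}$ are finitely supported. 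Each is a mean-preserving contraction of $\mu_0$, since the kernels $(\mu_i\pm\varepsilon\sigma_i)/(\text{mass})$ have barycenter $x_i$ and, together with the untouched boundary kernels, form a martingale transport. They differ because some $\beta_i\neq0$, and their average is $\mu$.

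\textbf{Main obstacle.} The hard part is constructing the $\sigma_i$. I need signed measures supported where all the $\mu_i$ overlap, summing to zero, with prescribed zeroth and first moments $(\beta_i,\beta_ix_i)$. The moment vectors are consistent: $\sum_i(\beta_i,\beta_ix_i)=(0,0)$ by the choice of $\beta$. The construction follows the graph argument of \citet{kleiner2024extreme}. Take $\eta$ as in Step 1, dominated by every $\mu_i$, whose support in $W$ has affinely spanning content; the concentration on $X$ makes $W$ full-dimensional. Then pick signed measures $\tau_i\ll\eta$ with the required moments, using the fact that the moment map from densities on a full-dimensional set onto $\mathbb{R}^{n+1}$ is surjective. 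Finally correct them to sum to zero, e.g. $\sigma_i=\tau_i-\frac1k\sum_j\tau_j$, after which the moments still match because $\sum_j\tau_j$ has zero moments. Scaling by a small $\varepsilon$ then gives nonnegativity.
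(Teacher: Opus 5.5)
Your plan follows the paper's proof: freeze the transport on the boundary atoms, use Lemma~\ref{lm6} to make the interior kernels overlap, then rearrange mass among them as in \citet{kleiner2024extreme}. Your Step~2 bookkeeping, and the $\tau_i \to \sigma_i$ correction, are correct once a common minorant $\eta$ exists.

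\textbf{The gap.} The gap is in Step~1, exactly where you flag it. Lemma~\ref{lm6} only gives the kernels $p_{x_i}$ a common \emph{support}, and measures with a common support can be mutually singular. Your symmetric perturbation $\mu_i\pm\varepsilon\sigma_i$ genuinely needs more than support overlap. If the $\mu_i$ are pairwise singular, any $\sigma_i$ with $|\sigma_i|\le C\mu_i$ and $\sum_i\sigma_i=0$ must vanish. So $\eta$ is indispensable, and nothing you cite supplies it. The fallback of mixing $p$ with a ``product-type perturbation'' does not work. A product coupling $\mu\otimes q$ has the same kernel $q$ at every atom, with barycenter the mean of $q$ rather than $x_i$. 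It therefore lies outside $\mathcal{M}(\mu,\mu_0)$ unless $\mu$ is a Dirac, and mixing with it breaks the martingale property. Any legitimate replacement would be a martingale coupling with marginals $(\mu,\mu_0)$ whose kernels share an absolutely continuous part, which is precisely what has to be proved.

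\textbf{How to close it.} One repair is to drop the symmetric construction and build $\mu^+$ and $\mu^-$ with separate, one-sided rearrangements. For $\mu^+$ and each $i\neq 1$, move $\nu_{1i}=\varepsilon f_i\mu_1$ from kernel $1$ to kernel $i$ and $\nu_{i1}=\varepsilon h_i\mu_i$ back, with $f_i,h_i\ge0$ bounded. Nonnegativity then only needs $\nu\le\mu_j$, not a common part. Let $C_j$ be the cone of moment vectors $(\nu(X),\int y\,d\nu)$ of such sub-measures of $\mu_j$. Then $C_1-C_i=\mathbb{R}^{n+1}$ whenever $supp(\mu_1)\cap supp(\mu_i)$ affinely spans $\mathbb{R}^n$. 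Indeed, a separating functional would give an affine function that is $\le0$ on $supp(\mu_1)$ and $\ge0$ on $supp(\mu_i)$, hence zero on the intersection. So you can solve $M(\nu_{1i})-M(\nu_{i1})=\varepsilon\beta_i(1,x_i)$ for each $i\neq 1$. Kernel $1$ then absorbs $\varepsilon\beta_1(1,x_1)$, because $\sum_i\beta_i(1,x_i)=0$. Repeat with $-\beta$ to get $\mu^-$. This uses only support-level information. Alternatively, upgrade Lemma~\ref{lm6} from supports to equivalence of measures. Apply the polar-set theorem of \citet{de2019irreducible} to $\{x_i\}\times B$ for every Borel $B$ of positive mass, add an exhaustion argument, and set $\eta:=\min_i(d\mu_i/d\hat\mu_{0\mid ri(X)})\,\hat\mu_{0\mid ri(X)}$.

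\textbf{A second unjustified claim.} You assert that concentration on $X$ makes $W$ full-dimensional. Concentration only says that each kernel's support has closed convex hull $X$, and that support may lie entirely in $rbd(X)$. For example, take $X$ the closed unit disk and $\mu_0$ uniform on the unit circle. Let $\mu$ put mass $\frac14$ on each of $(\pm a,0)$ and $(0,\pm a)$, with kernel densities $1\pm2a\cos\theta$ and $1\pm2a\sin\theta$, where $0<a<\frac12$. The hypotheses hold and $W=\emptyset$, yet $\mu$ fails to be extreme only through overlap on the circle. So the common set must be sought in the whole support of the prior part sent to $ri(X)$ under the frozen transport, boundary included. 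That prior part is the paper's $\hat\mu_{0\mid ri(X)}$, not $\mu_0$. Lemma~\ref{lm6} as stated has the same interior-only restriction, so importing it from the paper does not settle this point.
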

\textbf{Proof:}
Given $\hat{p} \in \mathcal{M}(\mu,\mu_0)$ defined in Proposition \ref{pr1}, for $ri(X)$, define $$\hat{\mu}_{0\mid ri(X)}:= \frac{1}{\mu(ri(X))}\sum_{x \in supp(\mu_{\mid ri(X)})}\mu(x)\hat{p}_x.$$ Notice $\mu_{\mid ri(X)}\preccurlyeq \hat{\mu}_{0\mid ri(X)}$, $supp(\mu_{\mid ri(X)})$ must lie in the $ri(X)$ and $ri(conv(supp(\hat{\mu}_{0\mid ri(X)})))=ri(X)$.  
\par
According to Lemma \ref{lm6}, there exists $ p \in \mathcal{M}(\mu_{\mid ri(X)}, \hat{\mu}_{0\mid ri(X)})$ such that $\forall x \in supp(\mu_{\mid ri(X)})$, $supp(\hat{\mu}_{0\mid ri(X)}) \cap ri(conv(supp(\hat{\mu}_{0\mid ri(X)}))) \subset supp(p_{x})$. Suppose the support of $\mu_{\mid ri(X)}$ is not affinely independent. Let $\{x_i\mid i \in I\}$ denote the support of $\mu_{\mid ri(X)}$. Let $\mu_i=\mu_{\mid ri(X)}(x_i)p_{x_i}$, then $\mu_i(ri(X)) = \mu_{\mid ri(X)}({x_i})$, $r(\mu_i)=:\frac{1}{\mu_i(ri(X))}\int_{ri(X)} x d\mu_i(x)=\int_{ri(X)} x dp_{x_i}(x)=x_i$, and $\hat{\mu}_{0\mid ri(X)} =\sum_i \mu_i$. The remainder of the proof follows the argument of Proposition 1 in \citet{kleiner2024extreme}. (Notice $supp(\hat{\mu}_{0\mid ri(X)}) \cap ri(conv(supp(\hat{\mu}_{0\mid ri(X)}))) \subset supp(p_{x})$, thus $supp(\hat{\mu}_{0\mid ri(X)}) \cap ri(conv(supp(\hat{\mu}_{0\mid ri(X)}))) \subset supp(\mu_i)$ which impies the procedure of constructing graphs always stops with a graph $G_N$ where $\{x_i \mid i \in C\}$ is affinely dependent.) 
$\hfill\blacksquare$

\begin{Lemma}\label{lm8}
Suppose $\mu \in MPC(\mu_0)$ and $\mu_{\mid ri(X)}$ is concentrated on $X$, then $\mu$ is a extreme point of $MPC(\mu_0)$ only if
 $\mu$ has affine independent support on $ri(X)$.
\end{Lemma}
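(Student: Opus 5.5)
Lemma \ref{lm8} is the analogue of Proposition \ref{pr3} (necessity part), now with only the restriction to $ri(X)$ assumed concentrated and without Assumption \ref{As1}. The plan is to pass from $FMPC(\mu_0)$ to $MPC(\mu_0)$ by the approximation route already used for Proposition \ref{pr3}, with Lemma \ref{lm7} in place of Proposition \ref{pr2}. Since an extreme point of $MPC(\mu_0)$ that is finitely supported is automatically extreme in $FMPC(\mu_0)$, Lemma \ref{lm7} settles the finitely supported case directly. The work is therefore to handle general $\mu$.

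I would argue by contraposition: suppose $supp(\mu_{\mid ri(X)})$ is not affinely independent, so it contains $n+2$ points, or more generally affinely dependent points $x_1,\dots,x_k$. I would build an explicit nontrivial splitting $\mu=\frac12\mu_1+\frac12\mu_2$ with $\mu_1,\mu_2\in MPC(\mu_0)$ as follows.

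\textbf{Step 1 (localize).} Take small disjoint balls $B_j$ around the $x_j$, inside $ri(X)$. Let $y_j$ be the barycenter of $\mu_{B_j}$. For radii small enough, the $y_j$ remain affinely dependent, since affine dependence of $k>n+1$ points is automatic; if $k\le n+1$, I choose the points so that dependence is stable, falling back on dimension counting. Let $\beta$ be a nonzero vector with $\sum\beta_j y_j=0$ and $\sum\beta_j=0$.

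\textbf{Step 2 (perturb).} Define
\[
\mu_{1,2}=\mu\pm t\sum_j \beta_j\,\frac{\mu_{B_j}}{\mu(B_j)}.
\]
For small $t$ both are positive measures, have total mass one, and have the same barycenter.

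\textbf{Step 3 (feasibility).} I must show $\mu_{1,2}\preccurlyeq\mu_0$. Here I would use concentration: the canonical $\hat p$ of Proposition \ref{pr1} has $\hat{supp}(\hat p_x)=X$ for $x\in supp(\mu_{\mid ri(X)})$, and by Lemma \ref{lm6} (applied to a finite approximation) the kernels charge all of $supp(\mu_0)\cap ri(X)$. So I would construct $p'\in\mathcal M(\mu_1,\mu_0)$ by reweighting the kernels. Mass $t\beta_j$ is moved between the blocks $B_j$, and the corresponding portions of $\mu_0$ are rerouted using the fact that each block's kernel mixture $\int_{B_j}\hat p_x\,\mu(dx)$ dominates a common small multiple of a fixed measure $\nu$ with interior barycenter. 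This is the graph/cycle argument of \citet{kleiner2024extreme} performed on the blocks rather than on atoms.

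\textbf{Main obstacle.} I expect Step 3 to be the hard part: guaranteeing a uniform overlap between the block kernels, so that a common $\nu$ exists for a positive $t$. My first attempt would be to average $\hat p$ with the transports given by the non-polarity result of \citet{de2019irreducible} (Theorem 2.5), exactly as in Lemma \ref{lm6}. This makes each block kernel charge every relatively open subset of $supp(\mu_0)\cap ri(X)$, after which a compactness argument on a fixed compact subset of $ri(X)$ yields the uniform lower bound.

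If that fails, the fallback is the route of Proposition \ref{pr3}. Approximate $\mu$ via Lemma \ref{lm2}, adapted to the restriction on $ri(X)$, by finitely supported $\mu_i$ concentrated on $X$. Use Choquet plus Lemma \ref{lm7} to write each $\mu_i$ as a mixture of extreme points with affinely independent support on $ri(X)$. Then pass to weak limits, using Lemma \ref{lm5} compactness, to conclude that the extreme point $\mu$ inherits this property.
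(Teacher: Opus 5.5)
Your fallback is essentially the paper's proof, which settles Lemma~\ref{lm8} in two lines: it combines the finitely supported case (Lemma~\ref{lm7}) with the approximation Lemma~\ref{lm2} and then ``takes weak limits.'' Your main route, a direct perturbation on blocks, is genuinely different. However, its Step~3 is a real gap, and the mechanism you propose does not close it.

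First, the averaging step does not give you the common minorant you need. Averaging transports as in Lemma~\ref{lm6} makes each block image $\eta_j:=\int_{B_j}\hat p_x\,\mu(dx)$ charge every relatively open subset of $supp(\mu_0)\cap ri(X)$, but that does not yield $\eta_j\ge\epsilon\nu$ for a common $\nu$. Two measures below $\mu_0$ can both charge every open set and still be mutually singular: restrict Lebesgue measure to a Borel set and to its complement, each meeting every open set in positive measure. No compactness argument turns lower bounds on $\eta_j(U)$ into a minorant measure. The right tool there is an exhaustion argument using non-polarity of $B_j\times S$ for every Borel $S\subset ri(X)$ with $\mu_0(S)>0$. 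Note also that without Assumption~\ref{As1}, the set $supp(\mu_0)\cap ri(X)$ on which you build $\nu$ may be empty.

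Second, and more seriously, a common minorant of the \emph{aggregated} images is not what the cycle argument of Kleiner et al.\ uses. There each atom carries a single kernel, so any rerouting that preserves an image's mass and barycenter automatically preserves the martingale constraint. Your $\mu_1$ instead rescales $\mu$ uniformly on $B_j$, so you need a kernel with barycenter exactly $x$ for $\mu$-a.e.\ $x\in B_j$. Suppose $\epsilon\nu=\int_{B_j}\gamma_x\,\mu(dx)$ with $\gamma_x\le\hat p_x$. Swapping $\epsilon\nu$ for a piece $\nu'$ of the same mass and barycenter keeps every kernel's barycenter only if $\int_{B_j}\gamma_x(\Omega)\,\delta_{r(\gamma_x)}\,\mu(dx)\preccurlyeq\nu'$, where $r(\gamma_x)$ is the barycenter of $\gamma_x$. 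Matching totals is not enough, and the same problem arises when blocks gain or lose mass.

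What would suffice is overlap of \emph{individual} kernels. For example, a coupling with $\hat p_x\ge\epsilon\nu$ for $\mu$-a.e.\ $x\in\bigcup_jB_j$, with $\bigcup_jB_j$ compactly inside the interior of $conv(supp(\nu))$, would make your perturbation feasible for small $t$. Neither the non-polarity theorem nor Lemma~\ref{lm6} provides this, because $\{x\}\times U$ is polar whenever $\mu(\{x\})=0$.

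As for the fallback, ``pass to weak limits'' hides the real work, both in your sketch and in the paper. To use extremality, note that an extreme point of the compact convex set $MPC(\mu_0)$ (Lemma~\ref{lm5}) is represented only by its Dirac mass. Hence the Choquet measures $Q_i$ of the $\mu_i$ converge to $\delta_\mu$, and you obtain extreme points $\xi_i\to\mu$. Two steps are then still missing.

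(i) Lemma~\ref{lm7} applies to $\xi_i$ only if $(\xi_i)_{\mid ri(X)}$ is concentrated on $X$. A mixture component is only known to have \emph{finer} concentration regions (Lemma~\ref{lm9}), and with finer cells $\xi_i$ can have arbitrarily many atoms in $ri(X)$, so no bound survives the limit.

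(ii) Even if each $\xi_i$ has at most $n+1$ affinely independent atoms in $ri(X)$, the limit inherits the bound of $n+1$ atoms there but not affine independence, since a nondegenerate simplex can collapse. You would therefore still need a separate perturbation argument for finitely many affinely dependent interior atoms of $\mu$.
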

\textbf{Proof:}
From Lemma \ref{lm7}, we know for any $\mu \in FMPC(\mu_0)$ with $\mu_{\mid ri(X)}$ concentrated on $X$, it is a mixture of extreme points of $FMPC(\mu_0)$ with affine independent supports. Taking weak limits, by Lemma \ref{lm2}, this result holds for for any $\mu \in MPC(\mu_0)$ with $\mu_{\mid ri(X)}$ concentrated on $X$.
$\hfill \blacksquare$ 

\par
Given two probability measure $\mu_1,\mu_2 \in MPC(\mu_0)$, we say $\mu_1$ has finer concentration regions than $\mu_2$ with respect to $\mu_0$ if for the corresponding $\{A^1_x\}_{x \in X}$ and $\{A^2_x\}_{x \in X}$ defined in Proposition \ref{pr1}, we have $A^1_x \subset A^2_x$ for $\mu_2$-a.s. all $x \in X$.
\begin{Lemma}\label{lm9}
Given two probability measure $\mu_1, \mu_2 \in MPC(\mu_0)$, if $\mu=\alpha\mu_1+(1-\alpha)\mu_2, \alpha \in (0,1)$, then both $\mu_1$ and $\mu_2$ have finer concentration regions than $\mu$ with respect to $\mu_0$.
\end{Lemma}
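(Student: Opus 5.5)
The plan is to build a martingale transport for $\mu$ directly out of the canonical transports for $\mu_1$ and $\mu_2$. Then we use the maximality in Proposition \ref{pr1} applied to $\mu$ to compare the cells.

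First, let $\hat p^1\in\mathcal{M}(\mu_1,\mu_0)$ and $\hat p^2\in\mathcal{M}(\mu_2,\mu_0)$ be the canonical transports given by Proposition \ref{pr1}. Their cells are $A^1_x=ri\,\hat{supp}(\hat p^1_x)$ and $A^2_x=ri\,\hat{supp}(\hat p^2_x)$. Define $q:=\alpha\hat p^1+(1-\alpha)\hat p^2$.

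The first marginal of $q$ is $\alpha\mu_1+(1-\alpha)\mu_2=\mu$, and its second marginal is $\mu_0$. The martingale property is linear in the coupling: for every bounded measurable $g$ we have $\int g(x)(\omega-x)\,dq=0$, because this holds for each $\hat p^j$ separately. Hence $q\in\mathcal{M}(\mu,\mu_0)$.

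Next, compute the kernel of $q$. Since $\mu_1,\mu_2\ll\mu$, let $f_1=d\mu_1/d\mu$ and $f_2=d\mu_2/d\mu$, so that $\alpha f_1+(1-\alpha)f_2=1$ $\mu$-a.e. Then, $\mu$-a.e.,
$$q_x=\alpha f_1(x)\hat p^1_x+(1-\alpha)f_2(x)\hat p^2_x .$$
On the set where $f_1(x)>0$ (which has full $\mu_1$-measure), this gives $supp(\hat p^1_x)\subset supp(q_x)$, and therefore $\hat{supp}(\hat p^1_x)\subset\hat{supp}(q_x)$.

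By the maximality clause of Proposition \ref{pr1} for the pair $(\mu,\mu_0)$, with canonical kernel $\hat p$, we have $\hat{supp}(q_x)\subset\hat{supp}(\hat p_x)$ for $\mu$-a.e. $x$. Since $\mu_1\ll\mu$, this also holds $\mu_1$-a.e. Combining, $\hat{supp}(\hat p^1_x)\subset\hat{supp}(\hat p_x)$ for $\mu_1$-a.e. $x$. The same argument applied to $\mu_2$ gives the corresponding inclusion $\mu_2$-a.e.

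The remaining step is to pass from closed convex hulls to relative interiors, i.e. to show $A^1_x\subset A_x$. In general $C\subset D$ does not imply $ri\,C\subset ri\,D$. However, both $A^1_x$ and $A_x$ contain $x$, and they are cells of partitions. If $ri\,C$ meets $ri\,D$ and $C\subset D$, then $ri\,C\subset ri\,D$ as long as $C$ is not contained in $rbd\,D$. Here $x\in ri\,C$ and $x\in ri\,D$, so $x\in ri\,C\cap ri\,D$ and $C\not\subset rbd\,D$.

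To make this precise, I will use the following fact. If $C\subset D$ are convex and $ri\,C\cap ri\,D\neq\emptyset$, then $ri\,C\subset ri\,D$. To prove it, take $y\in ri\,C$ and a common point $z\in ri\,C\cap ri\,D$. Extend the segment from $z$ through $y$ slightly beyond $y$ while staying in $C\subset cl\,D$. Then $y$ lies strictly between a point of $ri\,D$ and a point of $cl\,D$, so $y\in ri\,D$ by the standard line-segment principle. This gives $A^1_x\subset A_x$ for $\mu_1$-a.e. $x$ (in the paper's notation the definition is stated "$\mu_2$-a.s.", but the intended statement is almost surely under the measure in question), and the same holds for $\mu_2$.

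I expect the main obstacle to be measure-theoretic bookkeeping. The canonical kernels are only defined almost surely, and the cells are chosen through an analytically measurable selection. So I must check that the exceptional null sets, taken under $\mu$, $\mu_1$ and $\mu_2$, are compatible, which is handled by absolute continuity, and that the density formula for $q_x$ holds on a common full-measure set.
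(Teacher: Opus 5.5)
Your proposal is correct and follows the same route as the paper: mix the canonical transports into $q=\alpha\hat p^1+(1-\alpha)\hat p^2\in\mathcal{M}(\mu,\mu_0)$, then apply the maximality clause of Proposition \ref{pr1} for $(\mu,\mu_0)$. Your write-up is tighter than the paper's in three places: the paper writes the kernel of $q$ as $\alpha\hat p^1_x+(1-\alpha)\hat p^2_x$ without the densities $d\mu_i/d\mu$; it passes from $\hat{supp}(\hat p^i_x)\subset\hat{supp}(\hat p_x)$ to $A^i_x\subset A_x$ without comment, a step your line-segment argument using $x\in ri\,C\cap ri\,D$ supplies; and it asserts the inclusion $\mu$-a.s., whereas the argument only gives it $\mu_i$-a.s., as you correctly note.
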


\noindent\textbf{Proof of Lemma \ref{lm9}.} 
Let $\hat{p}^i(i=1,2)$ denotes the corresponding elements of $\mathcal{M}(\mu_i, \mu_0)$ that induce $\{A^i_x\}_{x \in X}$ as in the Proposition \ref{pr1}. Since $\mu=\alpha\mu_1+(1-\alpha)\mu_2$, then $\alpha\hat{p}^1 +(1-\alpha)\hat{p}^2 \in \mathcal{M}(\mu, \mu_0)$. Notice $\hat{supp}(\hat{p}^i_x) \subset \hat{supp}(\alpha\hat{p}^1_x +(1-\alpha)\hat{p}^2_x)$ for all $x \in X$, $i=1,2$. According to Proposition \ref{pr1}, there exists $\hat{p} \in \mathcal{M}(\mu,\mu_0)$ such that for all $p \in \mathcal{M}(\mu,\mu_0)$, $\hat{supp} (p_x)  \subset \hat{supp} (\hat{p}_x)$ for $\mu$-a.s. $x$. In particular, $\hat{supp} (\alpha\hat{p}^1 +(1-\alpha)\hat{p}^2)  \subset \hat{supp} (\hat{p}_x)$ for $\mu$-a.s. $x$. So $A^i_x \subset A_x:=ri\hat{supp}(\hat{p}_x)$ for $\mu$-a.s. all $x$, $i=1,2$.
$\hfill \blacksquare$

\end{document}